\documentclass[conference]{IEEEtran}

\usepackage{comment}
\usepackage[utf8]{inputenc}
\usepackage{amsmath,amssymb,bm}
\usepackage[dvips]{graphicx}
\usepackage{array}
\usepackage{cite}
\usepackage{enumerate}
\usepackage{booktabs}
\usepackage[draft]{hyperref} 
\usepackage{color}
\usepackage{float}
\usepackage[capitalise]{cleveref}
\usepackage[mathcal]{euscript}
\usepackage{multirow,multicol,tabu}
\usepackage{arydshln} 
\usepackage{subcaption}
\usepackage[left=0.625in,right=0.625in,top=0.75in,bottom=1in]{geometry}
\usepackage{soul}
\usepackage{url}
\usepackage{algorithm2e}
\SetKwComment{Comment}{/* }{ */}
\RestyleAlgo{ruled}
\newcolumntype{?}{!{\vrule width 1pt}}
\usepackage{pgfplots}
\usepackage{filecontents}

\usepackage{mathtools}
\usepackage{tikz,tabularx}

\usepackage{etoolbox}

\makeatletter
\newcommand{\changeoperator}[1]{%
  \csletcs{#1@saved}{#1@}%
  \csdef{#1@}{\changed@operator{#1}}%
}
\newcommand{\changed@operator}[1]{%
  \mathop{%
    \mathchoice{\textstyle\csuse{#1@saved}}
               {\csuse{#1@saved}}
               {\csuse{#1@saved}}
               {\csuse{#1@saved}}%
  }
}
\makeatother

\changeoperator{sum}
\changeoperator{prod}

\usetikzlibrary{decorations.pathreplacing}
\tikzstyle{every picture}+=[remember picture]

\usetikzlibrary{shapes,arrows,shadows}
\usepgflibrary{patterns}
\usetikzlibrary{patterns}
\usetikzlibrary{shapes.geometric}
\usetikzlibrary{arrows}

\usepackage{tikz,tabularx}
\usetikzlibrary{decorations.pathreplacing}
\tikzstyle{every picture}+=[remember picture]

\usetikzlibrary{decorations.markings}

\usepackage{multicol}
\usepackage{pgfplots}
\pgfplotsset{major grid style={densely dotted,black!70}}
\pgfplotsset{compat=1.9}
\usetikzlibrary{arrows}

\usepackage{amsthm}

\newtheorem{remark}{Remark}

\newtheorem{theorem}{Theorem}
\newtheorem{lemma}{Lemma}
\newtheorem{proposition}{Proposition}

\newcommand{\tcb}[1]{\textcolor{blue}{#1}}

\newcommand \f{\footnotesize}

\newcommand \m{\small}

\usepackage{tikz,tabularx}
\usetikzlibrary{decorations.pathreplacing}
\usetikzlibrary{shapes.geometric, arrows.meta, fit, positioning}
\tikzstyle{every picture}+=[remember picture]

\usetikzlibrary{shapes,arrows,shadows}
\usepgflibrary{patterns}
\usetikzlibrary{patterns}
\usetikzlibrary{shapes.geometric}
\usetikzlibrary{arrows}
\usetikzlibrary{positioning}

\usepackage{tikz,tabularx}
\usetikzlibrary{decorations.pathreplacing}
\tikzstyle{every picture}+=[remember picture]

\usetikzlibrary{decorations.markings}

\usepackage{multicol}
\usepackage{pgfplots}
\usepackage{mathtools}  
\newcommand{\qllr}[1][k]{\Re\left\{(\tilde r_{#1})^{\,M}\right\}}
\definecolor{deepbrown}{RGB}{184, 130, 13}

\title{Sequential Change Point Detection for PSK Modulation Switching}

\author{
\IEEEauthorblockN{Gopal Kataria}
\IEEEauthorblockA{\textit{IIIT Hyderabad} \\
gopal.kataria@research.iiit.ac.in}
\and
\IEEEauthorblockN{Davin Sequeira}
\IEEEauthorblockA{\textit{IIIT Hyderabad} \\
davin.sequeira@research.iiit.ac.in}
\and
\IEEEauthorblockN{Arti Yardi}
\IEEEauthorblockA{\textit{IIIT Hyderabad} \\
arti.yardi@iiit.ac.in}
}

\date{June 2026}

\usepackage{xcolor}

\definecolor{draftgray}{gray}{0.55}

\begin{document}

\maketitle


%

\begin{abstract}
We study the problem of sequential change-point detection (CPD) in the received data when the modulation scheme of the transmitter switches from one phase shift keying (PSK) to another over an additive white Gaussian noise channel. Such scenarios arise in wireless communication systems when a transmitter deploys adaptive modulation and coding (AMC) and  dynamically switches between different PSKs. Towards this, we  use two classical CPD algorithms; cumulative-sum (CUSUM) and moving-sum algorithm (MOSUM), implemented using the log-likelihood ratio (LLR) and its approximation.
For a switch from quadrature to binary PSK, we derive the moment-generating function and probability density function of the approximate LLR and use them to relate the CUSUM threshold to the average run length to false alarm (ARL2FA) and to the 
worst-case average detection delay
(WADD).
For a switch between any two PSK orders, we approximate the vector of overlapping MOSUM statistics as multivariate Gaussian and obtain bounds on ARL2FA and WADD. Monte Carlo simulations show that our derived expressions for ARL2FA and WADD closely match the empirical results across thresholds, validating the accuracy of the proposed analytical approximations and theoretical bounds.
\end{abstract}

\begin{IEEEkeywords}
Change-point Detection, Sequential analysis, Cumulative sum algorithm, Moving Sum algorithm, Phase Shift Keying, Modulation Classification
\end{IEEEkeywords}

\section{Introduction}
\label{Sec:Introduction}

In wireless communication systems, Adaptive modulation allows the transmitter to vary its modulation scheme according to the underlying channel conditions, thereby improving spectral efficiency and link reliability \cite{Goldsmith_Book}. Automatic modulation classification (AMC) enables the receiver to identify the modulation scheme directly from the received signal and has applications in cognitive radio, spectrum monitoring, and non-cooperative communication systems. AMC has been studied extensively with many works using different methods such as ALRT  \cite{SapianoMartin1996,PSK_classification_1995}, GLRT \cite{PolydorosKim1990}, HLRT \cite{AbdiDobre2004, Panagiotou2000,PolydorosKim1990}, Cumulants \cite{SwamiSadler2000,DobreBarNessSu2003} and other methods \cite{ZhuNandi2015}. 
Previously, change-point detection has been studied in channel-coded communication, particularly for detecting transitions between binary linear block codes \cite{vats2026neural}. A related problem arises when the modulation scheme changes during transmission, as the receiver must identify the time instant at which the transition occurs. This motivates the study of online change-point detection (CPD) for modulation switching, where the objective is to detect such transitions from sequentially observed signals while controlling the false-alarm rate and detection delay. While prior work has considered sequential detection for automatic modulation classification (AMC) using LSTMs \cite{waltman2019sequential}, to the best of our knowledge, this is the first work to investigate online CPD for identifying transitions between modulation schemes.

In this work, we consider a communication system in which the transmitter switches from $M^{\prime}$-PSK to $M$-PSK at an unknown time instant. The transmitted signal passes through an additive white Gaussian noise (AWGN) channel, and the receiver processes the observations sequentially to detect the modulation change. We formulate this problem as a sequential CPD problem and employ cumulative-sum (CUSUM) and moving-sum (MOSUM) detectors based on an exact and approximate log-likelihood ratio (LLR) statistic. These statistics are motivated by the likelihood-based PSK classification method proposed in \cite{PSK_classification_1995}, which develops a low SNR approximation to the likelihood functional for MPSK signals.

For the case $(M^\prime=4)$ and (M=2), closed-form expressions for the probability density function (PDF) and moment-generating function (MGF) of the approximate LLR are derived analytically. These expressions are used to characterize the false-alarm and detection-delay performance of the CUSUM detector. For higher-order PSK transitions, however, deriving closed-form expressions for the PDF or MGF of the approximate LLR becomes challenging. We therefore consider a MOSUM detector and use the first two moments of the approximate LLR to construct a multivariate Gaussian approximation to the vector of moving sums.

The main contributions of this work are summarized as follows:
\begin{itemize}
    \item We formulate the detection of a transition from
    $M'$-PSK to $M$-PSK as a sequential CPD problem and
    construct CUSUM and MOSUM detectors using exact and
    approximate LLR statistics.

    \item For the QPSK-to-BPSK transition, we derive the PDF
    and MGF of the approximate LLR and use them to analyze
    the false-alarm and detection-delay performance of CUSUM.

    \item For higher-order PSK transitions, we characterize
    the first two moments of the approximate LLR and develop
    a multivariate Gaussian approximation for the MOSUM
    statistic vector that accounts for dependence between
    overlapping windows.
\end{itemize}

\textbf{Organization:} The remainder of this paper is organized as follows. In \cref{Sec:System_model}, we describe the system model and formulate the change-point detection problem. In \cref{Sec:Proposed_method}, we present the proposed detection methods. Their performance is analyzed in \cref{Sec:Performance}. Numerical results are presented in \cref{Sec:Simulations}, followed by concluding remarks in \cref{Sec:Conclusion}.


\section{System Model and Problem Formulation}
\label{Sec:System_model}

%
\begin{figure}[tp]
\begin{center}
    \begin{tikzpicture}
    
    \draw [rounded corners,thick] (0.3,-0.7) rectangle (0.5+1.6+0.1,0.65);
    \node [above] at (0.35+0.8+0.1,0.1) {Modulation};
    \node [above] at (0.35+0.8+0.1,0.1-0.39) {Scheme};
    \node [above] at (0.35+0.8++0.1,0.1-0.39-0.39) {\f $M^{\prime}$ or $M$-PSK};
    
    \draw [->] (2.65,-0.1) -- (2.65,-0.1-0.7);
    
    \node [below] at (2.5,-1.3+0.35) {$s_1^{\prime}(t), \ldots,  s_{\tau-1}^{\prime}(t), s_{\tau}(t),s_{\tau+1}(t), \ldots $};
    
    \draw[decorate,decoration=brace, thick] (6.6-0.2-4.65+0.5+0.3,-2+0.5)--(4.65-0.05-4.5+0.5-0.12,-2+0.5);
    
    \draw[decorate,decoration=brace, thick] (8.6-0.2-4.5+1.1,-2+0.5)--(6.85-0.25-4.6+0.4,-2+0.5);
    
    \node [below] at (5.65-0.2-4.5+0.5,-2.15+0.53) {\m $M^{\prime}$-PSK};
    
    \node [below] at (7.75-0.2-4.5+0.5,-2.15+0.53) {\m $M$-PSK};
    
    \draw [->,thick] (0.5+1.7,0) -- (0.5+1.7+1,0);
    
    \draw [rounded corners,thick] (0.5+1.7+1,-0.6) rectangle (0.5+1.7+1+1.4,0.6);
    
    \node [above] at (0.5+1.7+1+0.7,0) {AWGN};
    \node [above] at (0.5+1.7+1+0.7,-0.45) {Channel};

    \draw [->] (0.5+1.7+1+1.4+0.5,0.1) -- (0.5+1.7+1+1.4+0.5,0.1+0.7);

    \node [above] at (0.5+1.7+1+1.4+0.5+0.35,0.1+0.7+0.06) {$r_1(t),  \ldots,  r_{\tau}(t), r_{\tau+1}(t), \ldots$};
        

    \draw [->,thick] (0.5+1.7+1+1.4,0) -- (0.5+1.7+1+1.4+1,0);

    \draw [rounded corners,thick] (0.5+1.7+1+1.4+1,-0.6) rectangle (0.5+1.7+1+1.4+1+1,0.6);

    \node [above] at (0.5+1.7+1+1.4+1+0.5,-0.22) {CPD};    
    \draw [->,thick] (0.5+1.7+1+1.4+1+1,0) -- (0.5+1.7+1+1.4+1+1+1,0);

    \node [right] at (0.5+1.7+1+1.4+1+1+1, 0) {$\hat{\tau}$};

    \end{tikzpicture}

\end{center}

\caption{System model for the sequential change-point detection (CPD) problem for modulation schemes}

\label{Figure_system_model_M1M2}

\end{figure}

In this work, we consider a setup where the transmitter uses the phase shift keying (PSK) modulator and at some unknown time instance $\tau$ it switches from $M^{\prime}$-PSK to $M$-PSK  (see \cref{Figure_system_model_M1M2}). 
Let $s_k^{\prime}(t)$ and $s_k(t)$ denote the modulated waveforms corresponding to the $k$-th symbol index for $M^{\prime}$-PSK and $M$-PSK respectively.
Let $T_s$ be the symbol duration. Note that for the $k$-th symbol, $s_k(t)$ corresponds to the transmitted signal waveform for the time duration $t=(k-1)T_s$ to $t=kT_s$. 
As illustrated in the figure, $s_1^{\prime}(t), s_2^{\prime}(t), \ldots,  s_{\tau-1}^{\prime}(t)$ denote the waveforms corresponding to $M^{\prime}$-PSK and $s_{\tau}(t), s_{\tau+1}(t), \ldots $ denote the waveforms corresponding to $M$-PSK. We consider the passband signal based model for the PSK modulation and assume that 
the noise is introduced by the
additive white Gaussian noise (AWGN) channel
to get the received waveforms $r_1(t),  \ldots,  r_{\tau}(t), r_{\tau+1}(t), \ldots$.
Given this received data, our goal is to estimate the location $\tau$, where the switch between modulation schemes happen. We formulate and solve this problem as a sequential change-point detection (CPD) problem~\cite{CDP_book_Basseville_1993}.

For the $k$-th symbol index, let the transmitted waveform $s_k(t)$ corresponding to $M$-PSK is given by~\cite{Communications_book_proakis},
\begin{equation}
    s_k(t)
    =
    \operatorname{Re}
    \left\{
        \sqrt{2S}\,
        e^{j\theta_k}
        p(t-(k-1)T_s)
        e^{j(2\pi f_c t+\theta_c)}
    \right\},
\end{equation}
where $S$ denotes the signal power, $T_s$ is the symbol duration, $p(t)$ is the transmit pulse, $\theta_c$ is the carrier phase, and $\theta_k$ denotes the phase of the $k$-th symbol.
In this work we assume that the carrier phase $\theta_c$ is known to the receiver and hence, without loss of generality, it can be assumed to be equal to zero~\cite{PSK_classification_1995}. For $M$-PSK, the information-bearing phase $\theta_k$ is given by
\begin{equation}
    \theta_k \in
    \left\{
        \frac{2\pi m}{M}
        : m=0,\ldots,M-1
    \right\}.
\label{Eq:All_phases_MPSK}
\end{equation}
The expressions for the waveform of $M^{\prime}$-PSK can be written in a similar manner. 
%
%

The received signal $r_k(t)$ corresponding to the $k$-th symbol is given by
\begin{equation}
r_k(t) = 
\begin{cases} 
s_k^{\prime}(t) + n_k(t),     & \text{if } k < \tau \\
s_k(t) + n_k(t),     & \text{if } k \geq \tau ,
\end{cases}
\end{equation}
where $n_k(t)$ denote the AWGN with a two-sided power spectral density of $N_0/2$ W/Hz. Let $\gamma_s$ denotes the symbol signal-to-noise ratio (SNR), given by
\begin{equation}
\gamma_s = \frac{S T_s}{N_0}.
\label{eq:snr_def}
\end{equation}
Let $\tilde{r}_k$ denote the normalized complex-valued observation
obtained after down-conversion, pulse-matched filtering, and symbol-rate
sampling~\cite{Communications_book_proakis,PSK_classification_1995} given by,
\begin{align}
    \tilde{r}_k = \frac{1}{\sqrt{N_0/T_s}}\int_{(k-1)T_s}^{kT_s} r_k(t)\,p(t-(k-1)T_s)\,e^{-j2\pi f_c}\,dt.
    \label{Eq:r_k_tilda}
\end{align}
Note that $\tilde{r}_k$ is a complex number given by
\begin{align}
 \tilde{r}_k = \tilde{r}_{I,k} + j\tilde{r}_{Q,k},   
  \label{Eq:r_k_tilda_real_imag}
\end{align}
where $\tilde{r}_{I,k}$ and $\tilde{r}_{Q,k}$ denote the real and imaginary parts of $\tilde{r}_{k}$ respectively. 
Let $n_k$ denote the complex envelop of the noise sample corresponding to $n_k(t)$. Since $n_k$ is obtained via similar matched-filter normalization (see \cref{Eq:r_k_tilda}), the real and imaginary parts of $n_k$ are known to be independent and identically distributed (i.i.d.) according to $\mathcal{N}(0,1)$ distribution~\cite{Communications_book_proakis}. Further for the AWGN channel,  noise samples $n_k$ are independent across $k$.

For our CPD problem, we find the likelihood of receiving $r_k(t)$ under the condition a particular modularion scheme is used. 
Let $l^{k}_{M\text{-PSK}}$ and $l^{k}_{M^{\prime}\text{-PSK}}$ denote the log likelihood of receiving $r_k(t)$ under the condition $M$-PSK and $M^{\prime}$-PSK are used respectively. 
The likelihood $l^{k}_{M\text{-PSK}}$ can be computed by conditioning over all possible values $\theta_k$ can take (see Eq.~(3) of~\cite{PSK_classification_1995}) and is given by
\begin{equation}
\begin{aligned}
l^{k}_{M\text{-PSK}}
&\coloneqq \log \mathbb{P}\big[r_k(t)\mid M\big] \\
&= \ln \left[
\frac{2}{M}\sum_{i=0}^{M/2-1}
\cosh\!\left(
\sqrt{2\gamma_s}\,
\Re\!\left\{\tilde{r}_k e^{j2\pi i/M}\right\}
\right)
\right],
\end{aligned}
\label{eq:llf_mpsk}
\end{equation}
where $\Re\{\cdot\}$ denotes the real part of a complex number and the detailed steps towards the computation of
\cref{eq:llf_mpsk} can be found in \cite[Sec.~II]{PSK_classification_1995} (see Eq.~(3) and (4) of \cite{PSK_classification_1995}).
The likelihood $l^{k}_{M^{\prime}\text{-PSK}}$ follows from \eqref{eq:llf_mpsk} with $M$ replaced by $M^{\prime}$. Let $\ell_k$ denote the
log-likelihood ratio (LLR) of the observation $r_k(t)$ given by,
\begin{align}
    \ell_k \coloneqq \log\frac{\mathbb{P}[r_k(t)\mid M]}{\mathbb{P}[r_k(t)\mid M^{\prime}]}
    = l^{k}_{M\text{-PSK}} - l^{k}_{M^{\prime}\text{-PSK}},
    \label{eq:llr}
\end{align}
which, written out explicitly, is
\begin{multline}
    \ell_k
    = \ln\!\Bigg[\frac{2}{M}\sum_{i=0}^{M/2-1}
        \cosh\!\big(\sqrt{2\gamma_s}\,\Re\{\tilde{r}_k e^{j2\pi i/M}\}\big)\Bigg] \\
    - \ln\!\Bigg[\frac{2}{M^{\prime}}\sum_{i=0}^{M^{\prime}/2-1}
        \cosh\!\big(\sqrt{2\gamma_s}\,\Re\{\tilde{r}_k e^{j2\pi i/M^{\prime}}\}\big)\Bigg].
    \label{eq:llr_explicit}
\end{multline}

In \cite{PSK_classification_1995}, authors use the LLRs defined in \cref{eq:llr} to study the problem of classification between modulation schemes $M^{\prime}$-PSK and $M$-PSK. They identify that the expression in \cref{eq:llr} is not amenable to analysis and hence propose to use approximations for these LLRs, which are applicable only when $M^\prime \geq 2M$. They approximated LLR $\ell_k$ defined in \cref{eq:llr} by $\lambda_k$ given by
\begin{align}
  \ell_k \approx  \lambda_k = \qllr 
\label{eq:approx_llr}
\end{align}
where recall that $\tilde{r}_k$ is defined in  \cref{Eq:r_k_tilda} and the detailed proof of the last step is provided in \cite[Sec.~III-A]{PSK_classification_1995} (see Eq.~(19) of \cite{PSK_classification_1995}).

For our problem of change-point detection corresponding to the received data $r_1(t), r_2(t), \ldots$, we make use of the sequence of LLRs $\ell_1, \ell_2, \ldots$ and corresponding approximate LLRs
$\lambda_1, \lambda_2, \ldots$ and provide a method to estimate the location of the change-point $\tau$ in \cref{Sec:Proposed_method}. Analysis of our approach is discussed in Sections~\ref{Sec:QPSK_BPSK} and \ref{Sec:PSK}.

\section{Proposed methods for CPD}
\label{Sec:Proposed_method}

In this section, we discuss our proposed approach for finding the location of the change-point $\tau$
corresponding to the received data 
$r_1(t), r_2(t), \ldots$ (see \cref{Figure_system_model_M1M2}). 
Since the received data is observed in a sequential manner, our problem  corresponds to the problem of \textit{sequential} CPD. While in this section we discuss the proposed method, in \cref{Sec:Performance} we study its analytical performance guarantees.

For the received data $r_1(t), r_2(t), \ldots$, 
we first compute the corresponding sequence of LLRs, $\ell_1, \ell_2, \ldots$, using \cref{eq:llr}, as well as the sequence of approximate LLRs, $\lambda_1, \lambda_2, \ldots$, using \cref{eq:approx_llr}. These sequences are then used to estimate the change-point location $\tau$.
We would like to mention here that the LLRs $\ell_1, \ell_2, \ldots$ form a sufficient statistics for our setup~\cite{CDP_book_Basseville_1993} and hence the performance of any classical CPD method constructed using $\ell_k$ will yield the optimal performance. However, as noted by the authors of \cite{PSK_classification_1995}, the distribution of $\ell_k$ is generally difficult to characterize, which makes the analysis of a CPD method based on $\ell_k$ challenging.
Hence we construct CPD methods using both $\ell_k$ and $\lambda_k$. 

In our work, we use the classical cumulative sum (CUSUM) procedure and moving sum (MOSUM) procedure to estimate the change-point~\cite{CDP_book_Basseville_1993}, which are described next. 
%

\subsection{CUSUM method for CPD}
\label{subsection_CUSUM_algo}

The CUSUM method is a classical method for sequential CPD problem~\cite{page_cusum, CDP_book_Basseville_1993}. 
We next describe the key idea in the CUSUM method for CPD.
The CUSUM method sequentially accumulates the LLRs $\ell_k$, as the data arrives. It can be observed from \cref{eq:llr} that the cumulative LLR should have a negative drift before the change-point $\tau$ and a positive drift after $\tau$. Consequently, if the accumulated LLR is negative at the time of the change, it may take some time to return to zero and begin accumulating positive values, resulting in a detection delay. To mitigate this effect, the CUSUM method resets the statistic to zero whenever it becomes negative, thereby allowing the accumulation of post-change LLR values to begin afresh.

Let $W_k^{\ell}$ denotes the accumulated LLRs till time $k$, for the CUSUM method. We will also refer to $W_k$ as the CUSUM statistic. The CUSUM method begins with $W_0^{\ell} = 0$ and for each $\ell_k$ received sequentially, $W_k^{\ell}$ is computed using $W_{k-1}^{\ell}$ as follows
\begin{align}
    W_k^{\ell}=\max\{0,\,W_{k-1}^{\ell}+\ell_k\}.    
    \label{Eq:Wk_recursion}
\end{align}
For a suitable positive threshold $h_c^{\ell}$, the estimated change-point (or the stopping time) $T_c^{\ell}$ for the CUSUM method is given by 
\begin{align}    
    T_c^{\ell}=\inf\{k\geq1:\ W_k^{\ell}\geq h_c^{\ell}\}.
    \label{Eq:Stopping_CUSUM_exact}
\end{align}
Note that, in the notation $h_c^{\ell}$ (or $T_c^{\ell}$), the subscript $c$ denotes the CUSUM method, while the superscript $\ell$ indicates that the LLRs $\ell_k$ are used for CP estimation.

A similar CUSUM procedure can be constructed using the approximate LLRs $\lambda_k$. For the CUSUM implemented using $\lambda_k$, let $W_k^{\lambda}$,  $h_c^{\lambda}$, and $T_c^{\lambda}$ denote the CUSUM statistic, threshold, and stopping time respectively. 
In \cite{PSK_classification_1995} it was shown that, the expected value of $\lambda_k$ before CP is equal to zero, i.e., $\mathbb{E}_{M^{\prime}}[\lambda_k]=0$. Since 
the standard CUSUM relies on a negative drift before the change point~\cite{CDP_book_Basseville_1993}, $\mathbb{E}_{M^{\prime}}[\lambda_k]=0$ causes a zero drift before CP. Due to this, $W_k^{\lambda}$ can cross a positive threshold even though no change has occurred, leading to a false detection.  
To tackle this issue, typically a negative bias is added to $W_k^{\lambda}$~\cite{CDP_book_Basseville_1993} to ensure the negative drift before CP. 
For our setup it is known that the expected value of $\lambda_k$ for post-change modulation scheme $M$-PSK is equal to $\mathbb{E}_{M}[\lambda_k]=(2\gamma_s)^{M/2}$~\cite[Eqs.~(29)--(32)]{PSK_classification_1995}. Hence we choose the mid-point $b=\tfrac{1}{2}(2\gamma_s)^{M/2}$ of the two expected values as a bias and the corresponding CUSUM procedure and stopping rule are given by  
\begin{equation}
\begin{aligned}
    W_k^{\lambda} &= \max\left\{0,\, W_{k-1}^{\lambda}+\lambda_k-b\right\},
    \\ 
    T_c^{\lambda}   &=  \inf\left\{k\geq1:\ W_k^{\lambda}\geq h_c^{\lambda}\right\}.   \label{Eq:Stopping_CUSUM_approx}
\end{aligned}
%
\end{equation}

\subsection{MOSUM method for CPD}
\label{subsection_MOSUM_algo}

The MOSUM procedure is another classical online change-point detection method~\cite{CDP_book_Basseville_1993}. 
While the CUSUM method accumulates all LLRs up to the current observation $k$, the MOSUM method considers a sliding window of size $w \geq 1$ and accumulates the most recent $w$ LLRs.
The MOSUM method is useful when the observations from the distant past have little influence on the stopping decision at time $k$.

At time $k$, the MOSUM method finds the statistic $Y_k^{\ell}$, termed as the MOSUM statistic, is defined as follows
\begin{equation}
    Y_k^{\ell} = \sum_{k^{\prime}=k-w+1}^{k}\ell_{k^{\prime}}, \label{Eq:MOSUM_statistic_exact}
\end{equation}
where $k=w, w+1, \ldots$ and the corresponding stopping time $h_m^{\ell}$ is given by
\begin{equation}
    T_m^{\ell} = \inf\left\{k\geq w:Y_k^{\ell}\ge h_m^{\ell}\right\}.\label{Eq:Stopping_MOSUM_exact}
\end{equation}
Note that in the notation $h_m^{\ell}$, similar to the notation of CUSUM method, the subscript $m$ is used to indicate MOSUM method. 
We  implement the MOSUM method  using approximate LLRs $\lambda_k$ as well and  the corresponding statistic, stopping time, and threshold  are denoted by
$Y_k^{\lambda}$, $T_m^{\lambda}$, and $h_m^{\lambda}$ respectively.

Note that in \cref{Eq:MOSUM_statistic_exact}, a weighted average of $\ell_d$ can be taken and this variant of CPD method is referred to as the finite moving average (FMA) chart method in the literature~\cite[Section 5.1.3]{CDP_book_Basseville_1993}. 
The MOSUM method 
is a special case of the FMA method where the weights corresponding to each LLR $\ell_i$ are set to $1$.




\subsection{Performance Measures}
\label{subsec:performance_measures}
The performance of a sequential CPD procedure is, in general, characterized by the two competing criteria: the frequency of false alarms before the change and the delay in detecting the change once it has occurred. We study the performance of our setup by analyzing the average run length to false alarm ($ARL2FA$), the average detection delay ($ADD$), and the worst-case average detection delay ($WADD$)~\cite[Ch.~4]{CDP_book_Basseville_1993}, which are defined in this subsection.
We define these quantities for the CUSUM CPD setup, implemented using the exact LLRs $\ell_k$. For other CPD methods, these quantities can be defined in an analogous manner. 


%
%
%

We first define the average run length (ARL). The $ARL$ is defined as the expected number of observations processed before the detector stops, denoted by $\mathbb{E}[T_c^{\ell}]$, where the expectation is taken under the hypothesis that governs the data. For our setup, the ARL before and after change-point will be $\mathbb{E}_{M^{\prime}}[T]$ and $\mathbb{E}_{M}[T]$ respectively. 

A false alarm occurs when a change is declared but the transmitter has not switched from $M^{\prime}$-PSK to $M$-PSK. The $ARL2FA$ is the average time the detector operates before raising such a false alarm, i.e., when the transmitter uses $M^{\prime}$-PSK for the entire time. A large $ARL2FA$ thus means that false alarms are rare. It is given by
\begin{equation}
    ARL2FA(T_c^{\ell}) = \mathbb{E}_{M^{\prime}}[T_c^{\ell}].
    \label{eq:ARL2FA_def}
\end{equation}

Once a change occurs at time $\tau$, we are interested in what is the expected delay between the change-point and the stopping time. For a fixed change-point $\tau$, this is the difference between the stopping time $T_c^{\ell}$ and $\tau$. The delay is meaningful only when we have not raised a false alarm, therefore we condition it on $T_c^{\ell}\geq\tau$. The average detection delay is
\begin{equation}
    ADD(\tau)
    =
    \mathbb{E}_{\tau}
    \left[
        T_c^{\ell}-\tau
        \,\middle|\,
        T_c^{\ell}\geq\tau
    \right],
\end{equation}
where $\mathbb{E}_{\tau}$ denotes the expectation when the transmitter switches from $M^{\prime}$-PSK to $M$-PSK at time $\tau$. 

Detection delay depends on the change point $\tau$ and the value of the statistic at the time of change, which is determined by pre-change observations. To account for the least favorable pre-change history possible, we consider the $WADD$ defined as
\begin{equation}
    WADD(T_c^{\ell})
    =
    \sup_{\tau\geq1}
    \operatorname*{ess\,sup}
    \mathbb{E}_{\tau}
    \left[
        (T_c^{\ell}-\tau)^+
        \,\middle|\,
        \mathcal{F}_{\tau-1}
    \right],
    \label{eq:WADD_def}
\end{equation}
where $(x)^+=\max(x,0)$ and $\mathcal{F}_{\tau-1}$ denotes the  $\sigma$-algebra generated by observations $\tilde{r}_1,\ldots,\tilde{r}_{\tau-1}$ received before the change. The WADD is thus the worst expected delay over all change times and pre-change histories.


\section{Performance analysis of the proposed CPD methods}
\label{Sec:Performance}

In this section, we study the performance of the CUSUM and MOSUM methods. As discussed earlier, the distribution of the LLRs $\ell_k$ is generally difficult to characterize~\cite{PSK_classification_1995} and hence for analysis, we focus on the CPD methods implemented using the approximate LLRs $\lambda_k$. 

For both CUSUM and MOSUM methods, the choice of thresholds $h_c^{\lambda}$ and $h_m^{\lambda}$ is a crucial design problem. While a higher threshold increases $ARL2FA$, it also leads to a larger detection delay, hence the two performance metrics cannot be simultaneously improved.
Further for practical applications, it is desirable to set a threshold that will fetch the user specified values of $ARL2FA$ or $WADD$. 
Hence, towards analyzing the performance of our methods, we aim to characterize the relation between any arbitrary choice of threshold and the corresponding values of $ARL2FA$ or $WADD$.   
In Sections~\ref{Sec:QPSK_BPSK} and \ref{Sec:PSK}, we study the performance of CUSUM and MOSUM methods respectively. 

For characterizing the performance of CUSUM, the distribution of $\lambda_k$ is required. Towards this, we derive the closed form expressions for the moment generating function (MGF) and the probability density function (PDF) of $\lambda_k$, for case when modulation scheme switches from QPSK to BPSK (see Theorems~\ref{Thm:MGF_QPSK_BPSK}, \ref{Thm:PDF_QPSK_BPSK}). We identify that deriving these for arbitrary PSKs become difficult (see \cref{Remark:Difficulty_in_any_MPSK}). For MOSUM, we derive the cumulative distribution function (CDF) of the vector corresponding to the MOSUM statistic (see \cref{thm:Gaussian_Approx}), which can be obtained for any PSK modulation scheme and use it to characterize $ARL2FA$ and $WADD$.

\subsection{Performance analysis of CUSUM for QPSK to BPSK}
\label{Sec:QPSK_BPSK}


For analyzing the performance of CUSUM, we use 
the analysis proposed by Wald and Siegmund~\cite[Sec.~5.2]{CDP_book_Basseville_1993}, which needs the MGF and PDF of $\lambda_k$
under both the hypotheses. 
We would like to mention here that, authors of \cite{PSK_classification_1995} have derived the expressions for the mean and variance of $\lambda_k$. 
In our work we derive the closed form expressions of MGF and PDF in Theorems~\ref{Thm:MGF_QPSK_BPSK} and \ref{Thm:PDF_QPSK_BPSK} respectively.


%


%
%
%


\begin{theorem}[MGF of $\lambda_k$]
\label{Thm:MGF_QPSK_BPSK}
Consider the setup when the modulation scheme of the transmitter switches from QPSK to BPSK. Then the MGF of the approximate LLRs $\lambda_k$ defined in \cref{eq:approx_llr} under the condition data is generated by QPSK is given by
\begin{align*}
    M_{\lambda}\big(t \mid \mathrm{QPSK}\big)
    ={}&
    \frac{1}{2\sqrt{1-4t^{2}}}
    \Bigg[
    \exp\!\left(\frac{2\gamma_s t}{1-2t}\right)
    \\
    &\quad
    +\exp\!\left(-\frac{2\gamma_s t}{1+2t}\right)
    \Bigg],
    \quad |t|<\tfrac12 .
\end{align*}
Further, when the data is generated by BPSK,
the MGF of $\lambda_k$ under BPSK is given by
\begin{align*}
    M_{\lambda}\big(t\big \vert BPSK\big)
    =
    \frac{1}{\sqrt{1-4t^2}}
    \exp\!\left(
    \frac{2\gamma_s t}{1-2t}
    \right),
    \qquad |t|<\tfrac{1}{2}.
\end{align*}
\end{theorem}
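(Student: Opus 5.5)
The plan is to write $\lambda_k$ for $M=2$ explicitly in terms of the in-phase and quadrature components of $\tilde r_k$, and then use the standard MGF of a (noncentral) chi-square with one degree of freedom. With $M=2$, \cref{eq:approx_llr} gives
\begin{equation*}
\lambda_k=\Re\{\tilde r_k^{\,2}\}=\tilde r_{I,k}^2-\tilde r_{Q,k}^2 ,
\end{equation*}
using \cref{Eq:r_k_tilda_real_imag}. Under the normalization in \cref{Eq:r_k_tilda}, the sampled observation is $\tilde r_k=\sqrt{2\gamma_s}\,e^{j\theta_k}+n_k$, with $\Re\{n_k\},\Im\{n_k\}$ i.i.d.\ $\mathcal{N}(0,1)$ and independent of $\theta_k$. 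I will check this amplitude against the fact quoted from \cite{PSK_classification_1995} that $\mathbb{E}_M[\lambda_k]=(2\gamma_s)^{M/2}=2\gamma_s$. For BPSK, $\mathbb{E}[\tilde r_{I,k}^2-\tilde r_{Q,k}^2]=2\gamma_s+1-1=2\gamma_s$, which agrees.

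The key lemma is elementary. If $X\sim\mathcal{N}(\mu,1)$, then
\begin{equation*}
\mathbb{E}[e^{sX^2}]=(1-2s)^{-1/2}\exp\!\big(\mu^2 s/(1-2s)\big)\quad\text{for } s<1/2 .
\end{equation*}
This follows by completing the square in the Gaussian integral. I apply it with $s=t$ to the in-phase term and with $s=-t$ to the quadrature term. Both applications require $|t|<1/2$, which gives the stated region of convergence.

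BPSK case: conditional on $\theta_k\in\{0,\pi\}$, we have $\tilde r_{I,k}\sim\mathcal{N}(\pm\sqrt{2\gamma_s},1)$ and $\tilde r_{Q,k}\sim\mathcal{N}(0,1)$, and the two are independent. The lemma depends only on $\mu^2=2\gamma_s$, so the sign of the mean is irrelevant. Independence then gives
\begin{equation*}
M_\lambda(t\mid\mathrm{BPSK})=(1-2t)^{-1/2}e^{2\gamma_s t/(1-2t)}\cdot(1+2t)^{-1/2},
\end{equation*}
which is the claimed expression.

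QPSK case: condition on $\theta_k$, which is uniform on $\{0,\pi/2,\pi,3\pi/2\}$.
\begin{itemize}
\item For $\theta_k\in\{0,\pi\}$ the computation is identical to the BPSK case.
\item For $\theta_k\in\{\pi/2,3\pi/2\}$ the roles swap: $\tilde r_{I,k}\sim\mathcal{N}(0,1)$ and $\tilde r_{Q,k}\sim\mathcal{N}(\pm\sqrt{2\gamma_s},1)$. The factor $\mathbb{E}[e^{-t\tilde r_{Q,k}^2}]$ then contributes $(1+2t)^{-1/2}\exp(-2\gamma_s t/(1+2t))$.
\end{itemize}
Averaging the two equally likely cases by the law of total expectation yields the factor $\tfrac12$ and the sum of the two exponentials over $\sqrt{1-4t^2}$.

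The main obstacle is not the computation itself. It is justifying the signal model $\tilde r_k=\sqrt{2\gamma_s}e^{j\theta_k}+n_k$, in particular the amplitude scaling and the unit noise variance per component, from the passband definitions. I would settle this by matching the cumulant or mean already established in \cite{PSK_classification_1995}, as done above.
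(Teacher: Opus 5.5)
Your proposal is correct and follows the paper's proof essentially step for step. The paper also writes $\lambda_k=\tilde r_{I,k}^2-\tilde r_{Q,k}^2$, uses the conditional independence of the two components given $\theta_k$ under $\tilde r_k=\sqrt{2\gamma_s}e^{j\theta_k}+n_k$, applies the noncentral chi-square MGF with $s=t$ and $s=-t$ (which yields $|t|<1/2$), and then averages over the equally likely phases. Your mean-matching check is a consistency check, not a derivation: it only fixes the amplitude once the unit-variance noise components are assumed. The paper likewise simply takes this signal model from the system description.
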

\begin{IEEEproof}
See \cref{Appendix:A_MGF_CUSUM}.
\end{IEEEproof}

\begin{theorem}[PDF of $\lambda_k$]
\label{Thm:PDF_QPSK_BPSK}
Consider the setup when the modulation scheme of the transmitter switches from QPSK to BPSK. Then, the PDF of the approximate LLR $\lambda_k$ defined in \cref{eq:approx_llr} when the data is generated by QPSK is given by
\begin{equation}
\begin{aligned}
    f_{\lambda_k}\big(x \mid \mathrm{QPSK}\big)
    ={}&
    e^{-\gamma_s}
    \sum_{n=0}^{\infty}
    \sum_{j=0}^{n}
    \frac{\binom{2n}{2j}}
    {\pi(2n)!\,2^{2n+1}}
    (2\gamma_s)^n
    \\
    &\times
    |x|^{n}\,
    K_{n-2j}\left(\frac{|x|}{2}\right).
    \label{eq:thm_qpsk_pdf}
\end{aligned}
\end{equation}
Further, when the data is generated by BPSK,
the PDF of $\lambda_k$ is given by
\begin{equation}
\begin{aligned}
    \label{eq:thm_bpsk_pdf}
    f_{\lambda_k}\big(x \mid \mathrm{BPSK}\big)
    ={}&
    e^{-\gamma_s}
    \sum_{n=0}^{\infty}
    \sum_{m=0}^{2n}
    \frac{\binom{2n}{m}}
    {\pi(2n)!\,2^{2n+1}}
    (2\gamma_s)^n
    \\
    &\times
    x^{2n-m}|x|^{m-n}
    K_{m-n}\left(\frac{|x|}{2}\right).
\end{aligned}
\end{equation}
\end{theorem}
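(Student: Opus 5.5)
The plan is to reduce $\lambda_k$ to a scaled product of two independent Gaussians and then use the series representation of the product-normal density. For $M=2$ we have $\lambda_k=\Re\{\tilde r_k^{2}\}=\tilde r_{I,k}^2-\tilde r_{Q,k}^2$. Comparing with the BPSK MGF in \cref{Thm:MGF_QPSK_BPSK}, which I take as given, conditioned on the transmitted phase we have $\tilde r_{I,k}\sim\mathcal N(\mu_I,1)$ and $\tilde r_{Q,k}\sim\mathcal N(\mu_Q,1)$, independent, with $\mu_I^2+\mu_Q^2=2\gamma_s$. The sign of the symbol does not matter, since only squares appear.

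\textbf{Reduction to a product.} The BPSK case is $(\mu_I,\mu_Q)=(\sqrt{2\gamma_s},0)$. The QPSK case is an equal mixture of $(\sqrt{2\gamma_s},0)$ and $(0,\sqrt{2\gamma_s})$. Set
\[
U=\frac{\tilde r_{I,k}-\tilde r_{Q,k}}{\sqrt2},\qquad V=\frac{\tilde r_{I,k}+\tilde r_{Q,k}}{\sqrt2}.
\]
This is a rotation, so $U,V$ are independent unit-variance Gaussians and $\lambda_k=2UV$. Their means are $(a,b)=(\sqrt{\gamma_s},\sqrt{\gamma_s})$ for the real-axis component and $(a,b)=(-\sqrt{\gamma_s},\sqrt{\gamma_s})$ for the imaginary-axis component. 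Note that $a^2+b^2=2\gamma_s$ in both cases, which produces the common prefactor $e^{-\gamma_s}$.

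\textbf{Product-normal density.} For $Z=UV$, write
\[
f_Z(z)=\int_{\mathbb R}\varphi(u-a)\,\varphi(z/u-b)\,\frac{du}{|u|}.
\]
Pull out $e^{-(a^2+b^2)/2}$ and expand the cross term $e^{au+bz/u}$ as a power series. Group the terms by total degree $2n$ (odd total degree vanishes under $u\mapsto-u$ symmetrisation) and use the binomial theorem to obtain coefficients $\binom{2n}{m}a^m b^{2n-m}$ multiplying $\int u^{2m-2n-1}e^{-u^2/2-z^2/(2u^2)}du$. Each remaining integral is evaluated with
\[
\int_0^\infty u^{\nu-1}e^{-u^2/2-z^2/(2u^2)}\,du=|z|^{\nu/2}K_{\nu/2}(|z|).
\]
This should yield
\[
f_Z(z)=\frac{e^{-(a^2+b^2)/2}}{\pi}\sum_{n\ge0}\sum_{m=0}^{2n}\binom{2n}{m}\frac{a^mb^{2n-m}}{(2n)!}\,z^{2n-m}|z|^{m-n}K_{m-n}(|z|).
\]
Substituting $x=2z$, with Jacobian $1/2$, gives the argument $|x|/2$ and the powers of $2$.

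\textbf{Specialising to the two hypotheses.} With $a=b=\sqrt{\gamma_s}$ we get $a^mb^{2n-m}=\gamma_s^n$; after collecting the powers of $2$ this should produce \cref{eq:thm_bpsk_pdf}. For QPSK, average the $a=b$ and $a=-b$ densities. The factor $(\pm1)^m$ cancels the odd-$m$ terms, leaving $m=2j$. For these terms $x^{2n-2j}|x|^{2j-n}=|x|^{n}$, and $K_{2j-n}=K_{n-2j}$, which gives \cref{eq:thm_qpsk_pdf}.

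The main obstacle is the middle step. The double series has to be justified (Fubini/Tonelli for interchanging sum and integral, which holds by absolute convergence of the exponential series against the Gaussian weights). The power-of-$2$ constants also have to be tracked carefully so that they match $1/(\pi(2n)!\,2^{2n+1})$ exactly. As a sanity check, I would verify the $n=0$ term against the known central density $K_0(|x|/2)/(2\pi)$ of $\tilde r_I^2-\tilde r_Q^2$.
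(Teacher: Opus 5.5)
Your argument is correct and is essentially the paper's route. The paper also factors $\lambda_k=(\tilde r_{I,k}+\tilde r_{Q,k})(\tilde r_{I,k}-\tilde r_{Q,k})$ into a product of independent Gaussians and applies the product-normal series, then averages over phases so the odd-$m$ terms cancel for QPSK. The differences are that it cites the series with $\sigma_U=\sigma_V=\sqrt2$, whereas you rescale to unit variance and derive it yourself; your constants do reduce to $1/(\pi(2n)!\,2^{2n+1})$.

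One justification should be fixed: the conditional law of $(\tilde r_{I,k},\tilde r_{Q,k})$ comes from the signal model $\tilde r_k=\sqrt{2\gamma_s}\,e^{j\theta_k}+n_k$. It cannot be read off from the MGF of $\lambda_k$, which does not determine that joint law.
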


\begin{IEEEproof}
See \cref{Appendix:B_PDF_CUSUM}
\end{IEEEproof}

While we derive MGF and PDF for $\lambda_k$ in above theorems, extending these proofs to arbitrary PSK is difficult, as explained in the following remark.

\begin{remark}
\label{Remark:Difficulty_in_any_MPSK}
For the QPSK to BPSK case we have $M=2$, and   
from \cref{eq:approx_llr} we have $\lambda_k = \Re\left\{(\tilde r_k)^{2}\right\}=\tilde r_{I,k}^2-\tilde r_{Q,k}^2$.
Using this, $\lambda_k$ can be expressed as sum of two independent terms, which enables the derivation of  MGF and PDF.
For an arbitrary $M$, $\lambda_k = \qllr $ will be a degree-$M$ polynomial in $\tilde r_{I,k}$ and $\tilde r_{Q,k}$ and this makes finding the closed forms for MGF or PDF, in general, a difficult problem.
%
%
\hfill $\square$
\end{remark}

\subsubsection{Analysis of $ARL2FA$ and $WADD$}
\label{subsection_Analysis_CUSUM}
We now use the analysis proposed by Wald and Siegmund~\cite[Sec.~5.2]{CDP_book_Basseville_1993} to study a relation between the given choice of threshold and the corresponding values of $ARL2FA$ and $WADD$.
%
%
%
%

Recall from \cref{Eq:Stopping_CUSUM_approx}, for the CUSUM implemented via $\lambda_k$, a bias $b$ needs to be subtracted from the CUSUM statistic. For the QPSK to BPSK setup, this bias is set to $b=\gamma_s$ (see \cref{subsection_CUSUM_algo}). Equivalently, instead of adding $\lambda_k$, we are now adding the term $\lambda_k - \gamma_s$. 
In this subsection, we denote this resulting increment by $S_k=\lambda_k-\gamma_s$. Recall that the mean of $\lambda_k$ before and after change are $0$ and $2\gamma_s$ respectively \cite[Eqs.~(29)--(32)]{PSK_classification_1995} and this implies that the means of $S_k$ before and after change-point are
\begin{equation}
    \mathbb{E}_{\scriptstyle QPSK}[S_k]=-\gamma_s,
    \qquad
    \mathbb{E}_{\scriptstyle BPSK}[S_k]=\gamma_s.
    \label{eq:Sk_means}
\end{equation}
The sequence $S_1, S_2, \ldots$ can be considered as equivalent LLRs and we consider the CUSUM procedure corresponding to it. Let $W_c^{s}$, $T_c^{s}$, and $h_c^s$ denote the corresponding CUSUM statistic, stopping time, and threshold rspectively. 
\cref{eq:Sk_means} implies that
when the transmitter is using QPSK, the CUSUM statistic $W_k^{s}$ has a negative drift and is reset to zero whenever it becomes negative, so it stays close to zero. Once the transmitter switches to BPSK, the drift becomes positive and the statistic moves towards the threshold. 
%
%
Hence, the time required to cross the threshold, and consequently both performance measures ($ARL2FA$ and $WADD$), are determined by the magnitude of this drift.
To quantify this, we use Wald's approximation to ARL of the CUSUM procedure, which is recalled in the following lemma.

\begin{lemma}[Wald's approximation for ARL]\cite[Sec.~5.2.2.2]{CDP_book_Basseville_1993}
\label{Lemma_Wald}
Let $S_1, S_2, \ldots$ denote the sequence of LLRs corresponding to the CUSUM procedure. Suppose the CUSUM statistic $W_k^{s}$ is initialized to the value $0$. Then the average time required for $W_k^{s}$ to cross the threshold $h_c^s$, denoted by $L(h_c^s)$, can be approximated as
\begin{equation}
    L(h_c^s)
    \approx
    \frac{1}{\mathbb{E}[S_k]}
    \left(
        h_c^s+\frac{e^{-\omega h_c^s}-1}{\omega}
    \right),
    \label{eq:wald_arl}
\end{equation}
where $\omega$ denotes the single non-zero root of the equation
\begin{align}
\mathbb{E}\!\left[e^{-\omega S_k}\right]=1,
\label{Eq:Root_MGF_Wald}
\end{align}
where the expectation in \cref{Eq:Root_MGF_Wald} would be taken over either the pre-change data or post-change data. 
\hfill $\square$
\end{lemma}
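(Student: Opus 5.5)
The plan is to read the CUSUM run as a renewal process made of repeated Wald sequential probability ratio tests (SPRTs), and then apply Wald's two classical identities while ignoring the overshoot over the boundaries.

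\textbf{Step 1: renewal decomposition.} Starting from $W_0^s=0$, the CUSUM recursion $W_k^s=\max\{0,W_{k-1}^s+S_k\}$ evolves like a random walk $\sum_i S_i$ until one of two things happens: it exits above $h_c^s$, or it drops to (below) zero, at which point it restarts from $0$. Since the $S_k$ are i.i.d., each excursion is an independent copy of an SPRT with upper boundary $h_c^s$ and a lower boundary at $0$. If $N$ denotes the length of one excursion and $P_0$ the probability that it ends at the lower boundary, the number of excursions until an upper exit is geometric. Wald's identity for the total time then gives
\begin{equation*}
L(h_c^s)=\frac{\mathbb{E}[N]}{1-P_0}.
\end{equation*}

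\textbf{Step 2: an SPRT with a slightly lower boundary.} With the lower boundary exactly at $0$, both $\mathbb{E}[N]$ and $1-P_0$ degenerate to leading order. I would therefore first treat an SPRT with boundaries $-a$ and $h=h_c^s$, for small $a>0$, and let $a\to 0$ at the end.
\begin{itemize}
\item Wald's fundamental identity, applied to the martingale $e^{-\omega\sum_{i\le n}S_i}$ (a martingale because $\omega$ solves \cref{Eq:Root_MGF_Wald}), with optional stopping and overshoot neglected, gives
\begin{equation*}
P_0\,e^{\omega a}+(1-P_0)\,e^{-\omega h}=1,
\qquad\text{so}\qquad
\frac{P_0}{1-P_0}=\frac{e^{-\omega h}-1}{1-e^{\omega a}}.
\end{equation*}
\item Wald's identity $\mathbb{E}\big[\sum_{i\le N}S_i\big]=\mathbb{E}[S_k]\,\mathbb{E}[N]$, again without overshoot, gives
\begin{equation*}
\mathbb{E}[S_k]\,\mathbb{E}[N]=h(1-P_0)-aP_0 .
\end{equation*}
\end{itemize}

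\textbf{Step 3: combine and take the limit.} Dividing the second relation by $\mathbb{E}[S_k](1-P_0)$ gives
\begin{equation*}
L=\frac{1}{\mathbb{E}[S_k]}\Big(h-a\,\frac{P_0}{1-P_0}\Big).
\end{equation*}
Substituting the first relation and using $1-e^{\omega a}=-\omega a+O(a^2)$, we get
\begin{equation*}
a\,\frac{P_0}{1-P_0}\;\longrightarrow\;-\frac{e^{-\omega h}-1}{\omega}\qquad\text{as } a\to 0 .
\end{equation*}
This yields exactly \cref{eq:wald_arl}. The same formula applies under either hypothesis, with the corresponding mean $\mathbb{E}[S_k]$ and root $\omega$; for our $S_k$, the MGFs from \cref{Thm:MGF_QPSK_BPSK} make $\mathbb{E}[e^{-\omega S_k}]$ explicit.

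\textbf{Main obstacle.} The delicate part is the justification rather than the algebra. Three points need care:
\begin{itemize}
\item \emph{Existence and uniqueness of $\omega\neq0$.} The function $u\mapsto\mathbb{E}[e^{-uS_k}]$ is convex, equals $1$ at $u=0$, and has slope $-\mathbb{E}[S_k]\neq 0$ there. It therefore has at most one other root, and such a root exists provided the MGF grows to $+\infty$ at the appropriate end of its domain. That growth holds here because of the factor $(1-4t^2)^{-1/2}$ in \cref{Thm:MGF_QPSK_BPSK}, and $\omega$ lies in the domain $|t|<\tfrac12$.
\item \emph{Validity of optional stopping.} This holds because $N$ has exponential tails.
\item \emph{Controlling the overshoot.} Neglecting the excess over the boundaries is precisely what makes the result an approximation ``$\approx$'' rather than an identity. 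I would state this as the standing Wald assumption, as in \cite[Sec.~5.2.2.2]{CDP_book_Basseville_1993}, rather than try to bound it.
\end{itemize}
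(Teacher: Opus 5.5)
Your derivation is correct and is essentially the standard argument behind this result. The paper gives no proof beyond the citation to Basseville--Nikiforov (Sec.~5.2.2.2), where the formula is obtained the same way: view CUSUM as repeated SPRTs so that $L=\mathbb{E}[N]/(1-P_0)$, apply Wald's no-overshoot approximations for the OC and ASN with lower threshold $-a$, and let $a\to 0$. You are right to attribute the ``$\approx$'' to neglecting the excess over the boundaries; that neglect is crudest at the lower boundary as $a\to0$, and it is what the Siegmund correction $h_c^{*}=h_c^{s}+\varrho_+-\varrho_-$ used later in the paper accounts for.
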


We now use \cref{Lemma_Wald} to study $ARL2FA$ and $WADD$ for our problem. 
We first discuss aanlysis of $WADD$ and then discuss $ARL2FA$.

\textit{Analysis of $WADD$:} 
To characterize the delay in detection,
we need to find the time required for $W_k^s$ to cross the threshold $h_c^s$ after change-point, i.e., when the data is generated by the BPSK modulation scheme. Note that due to pre-change data, $W_k^s$ might already have reached some positive value $h<h_c^s$. In this case, the delay would correspond to the time required for $W_k^s$ to cross $h_c^s-h$. 
For characterizing $ADD$, we need to consider all possible values that 
$h$ can take (and the corresponding distribution for $W_k^s$ to reach value $h$ with pre-change data) which makes the analysis of $ADD$ difficult, in general. 

In the CUSUM procedure, note that the worst delay would correspond to the situation when $W_k^s=0$ at the change-point location. Hence we can directly use \cref{Lemma_Wald} to characterize $WADD$. Let $\omega_1$ denote the non-zero root of \cref{Eq:Root_MGF_Wald} when the expectation is taken under BPSK, i.e., $\mathbb{E}_{\scriptstyle BPSK}[e^{-\omega_1 S_k}]=1$. 
Since MGF of $S_k$ is defined as $\mathbb{E}_{\scriptstyle BPSK}[e^{t S_k}]$, we have $t=-\omega_1$.
We can find $\omega_1$ by using the MGF of $\lambda_k$ for BPSK derived in \cref{Thm:MGF_QPSK_BPSK}. 
Since $S_k=\lambda_k-\gamma_s$, we have $\mathbb{E}_{\scriptstyle BPSK}[e^{-\omega_1 S_k}]=e^{\gamma_s\omega_1}M_{\lambda}(-\omega_1\mid\mathrm{BPSK})$. Substituting the MGF from \cref{Thm:MGF_QPSK_BPSK} into \cref{Eq:Root_MGF_Wald}, $\omega_1$ is the solution of
\begin{equation}
    \frac{1}{\sqrt{1-4\omega_1^2}}
    \exp\!\left(\gamma_s\omega_1-\frac{2\gamma_s\omega_1}{1+2\omega_1}\right)=1,    
    \label{eq:omega1}
\end{equation}
where the range of $\omega_1$ is the region where $|-\omega_1|<1/2$. 
Since $\mathbb{E}_{\scriptstyle BPSK}[S_k]=\gamma_s>0$ we further get a condition that $\omega_1>0$.
In general, we observe that 
finding the closed-form solution \cref{eq:omega1} for $\omega_1$ is not possible. However, $\omega_1$ satisfying \cref{eq:omega1} can be obtained numerically by searching over $0<\omega_1<1/2$.

\textit{Analysis of $ARL2FA$:}
To characterize false alarm, we need the find time required for $W_k^s$ to cross the threshold $h_c^s$ when the data is generated by QPSK, i.e., before change-point. In this case, since $W_k^s$ starts at zero 
we can use \cref{Lemma_Wald} to approximate $ARL2FA$, with the expectation in \cref{eq:wald_arl} and \cref{Eq:Root_MGF_Wald} taken under QPSK.
Let $\omega_0$ denote the non-zero root of \cref{Eq:Root_MGF_Wald} when the expectation is taken under QPSK, i.e., $\mathbb{E}_{\scriptstyle QPSK}[e^{-\omega_0 S_k}]=1$. Using $S_k=\lambda_k-\gamma_s$ and the MGF of $\lambda_k$ for QPSK from \cref{Thm:MGF_QPSK_BPSK}, we have $\mathbb{E}_{\scriptstyle QPSK}[e^{-\omega_0 S_k}]=e^{\gamma_s\omega_0}M_{\lambda}(-\omega_0\mid\mathrm{QPSK})$. Since $\mathbb{E}_{\scriptstyle QPSK}[S_k]=-\gamma_s<0$, the root now satisfies $\omega_0<0$, and $\omega_0$ is the solution of
\begin{multline}
    \frac{e^{\gamma_s\omega_0}}{2\sqrt{1-4\omega_0^2}}
    \Bigg[
        \exp\!\left(-\frac{2\gamma_s\omega_0}{1+2\omega_0}\right)
        +\exp\!\left(\frac{2\gamma_s\omega_0}{1-2\omega_0}\right)
    \Bigg]=1,
    \label{eq:omega0}
\end{multline}
in the region $|\omega_0|<1/2$. As for $\omega_1$, this equation has no closed-form solution, and $\omega_0$ is obtained numerically by searching over $-1/2<\omega_0<0$.

We observe that $ARL2FA$ obtained from \cref{Lemma_Wald} is not accurate for our setup (see \cref{Sec:Simulations}).
Similar observations are also registered by Siegmund~\cite[Sec.~5.2.2.3]{CDP_book_Basseville_1993} and has suggested a method to rectify it. Siegmund proposed to replace $h_c^s$ in \cref{eq:wald_arl} by a corrected threshold $h_c^{*}$ defined as
\begin{equation}
    h_c^{*}=h_c^{s}+\varrho_+-\varrho_-,
    \label{eq:hstar}
\end{equation}
where $\varrho_+$ and $\varrho_-$ denote the expected overshoot of the cumulative sum of $S_k$ over the upper boundary $h_c^s$ and over the lower boundary $0$, respectively. Note that $\varrho_+$ and $\varrho_-$ are constants that depend only on the distribution of $S_k$ under QPSK, and not on the threshold. We compute them using the method described in \cite[Sec.~5.2]{CDP_book_Basseville_1993}.
\subsection{Analysis of MOSUM for any $M^{\prime}$-PSK to $M$-PSK}
\label{Sec:PSK}


For the MOSUM method, finding the analogous approximations for $ARL2FA$ and $WADD$ are not straightforward~\cite{CDP_book_Basseville_1993} and hence we use the bounds provided by Bohm--Hackl~\cite{bohm1990improved}. For these bounds, we need to find the PDF of the vector corresponding to the MOSUM statistic.

For $k=w, w+1, \ldots$, the MOSUM statistic $Y_k^\lambda$ is given by
\begin{equation}
    Y^\lambda_k
    =
    \sum_{k^{\prime}=k-w+1}^{k}\lambda_{k^{\prime}}.    
    \label{Eq:MOSUM_stat_lambda}
\end{equation}
Let us define a vector $\mathbf{Y}_n$ corresponding to the sequence of MOSUM statistics $Y^\lambda_k$, for $n=1, 2, \ldots$ as follows 
\begin{equation}
    \mathbf{Y}_n
    =
    \begin{bmatrix}
        Y_{w}^\lambda &
        Y_{w+1}^\lambda &
        \cdots &
        Y_{w+n-1}^\lambda
    \end{bmatrix}^{T}.
    \label{Eq:MOSUM_stat_lambda_vec}
\end{equation}
We now consider the situation when 
all $\lambda_{k^{\prime}}$ \cref{Eq:MOSUM_stat_lambda} corresponds to the data generated by the same modulation scheme. It can be seen from \cref{Eq:MOSUM_stat_lambda}, for a sufficiently large $w$, $Y_k^\lambda$ can be approximated by a Gaussian random variable. However, 
any $Y_k^\lambda$ and $Y_{k+j}^\lambda$, with $\lvert d \rvert \leq w-1$ will depend on each other. Thus, in the following theorem we approximate the distribution of $\mathbf{Y}_n$ by a Gaussian distribution.

\begin{theorem}[Gaussian approximation of the MOSUM vector]
\label{thm:Gaussian_Approx}
For sufficiently large $w$, the joint distribution of $\mathbf{Y}_n$ defined in \cref{Eq:MOSUM_stat_lambda_vec} can be approximated by the multivariate Gaussian distribution as follows
\begin{equation}
    \mathbf{Y}_n
    \overset{\mathrm{approx.}}{\sim}
    \mathcal{N}
    \left(
        w\mu\mathbf{1}_n,
        \boldsymbol{\Sigma}_n
    \right),
\end{equation}
$\mathbf{1}_n$ denotes the all-ones vector and entry $\left[\boldsymbol{\Sigma}_n\right]_{i,j}$ in the covariance matrix $\boldsymbol{\Sigma}_n$ corresponding to the $i$-th row and $j$-th column is given by
\begin{equation}
    \left[\boldsymbol{\Sigma}_n\right]_{i,j}
    =
    \sigma^2
    \max\!\left(w-\lvert i-j\rvert,0\right),
    \qquad
    1\leq i,j\leq n,
\end{equation}
where $\mu$ and $\sigma^2$ are the mean and variance of $\lambda_{k^{\prime}}$ in \cref{Eq:MOSUM_stat_lambda}.
%
%
\end{theorem}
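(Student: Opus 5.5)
The plan is to show three things in turn: the $\lambda_{k'}$ are i.i.d.; the mean and covariance of $\mathbf{Y}_n$ have exactly the stated form; and a multivariate central limit theorem makes the Gaussian approximation valid as $w$ grows.

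\textbf{Independence and moments.} First I will argue that, when all observations come from one modulation scheme, the $\lambda_{k'}$ are i.i.d. with finite mean $\mu$ and variance $\sigma^2$. Each $\lambda_{k'}=\Re\{(\tilde r_{k'})^M\}$ is a fixed measurable function of $\tilde r_{k'}$. The symbol phases $\theta_{k'}$ are i.i.d. uniform, and the noise samples $n_{k'}$ are independent across $k'$ with i.i.d. $\mathcal{N}(0,1)$ real and imaginary parts, as stated in \cref{Sec:System_model}. Hence the $\tilde r_{k'}$, and therefore the $\lambda_{k'}$, are i.i.d. Finiteness of the moments is clear because $\lambda_{k'}$ is a polynomial of degree $M$ in Gaussian variables (plus bounded signal terms), so all its moments exist. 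The explicit values of $\mu$ and $\sigma^2$ are available from \cite[Eqs.~(29)--(32)]{PSK_classification_1995}, but the theorem only needs them as parameters.

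\textbf{Mean and covariance.} By linearity, $\mathbb{E}[Y^\lambda_{w+i-1}]=w\mu$ for every $i$. For the covariance, write $Y^\lambda_{w+i-1}=\sum_{k'}\mathbb{1}\{i\le k'\le w+i-1\}\lambda_{k'}$, and similarly for index $j$. Bilinearity of covariance together with independence leaves only the diagonal terms. This gives $\mathrm{Cov}(Y^\lambda_{w+i-1},Y^\lambda_{w+j-1})=\sigma^2\,|[i,w+i-1]\cap[j,w+j-1]|$. The two windows overlap in exactly $\max(w-|i-j|,0)$ indices, which is the claimed $[\boldsymbol{\Sigma}_n]_{i,j}$.

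\textbf{Gaussianity.} This is the substantive step. By the Cramér--Wold device, it suffices to show that every linear combination $\mathbf{a}^T\mathbf{Y}_n=\sum_{k'=1}^{w+n-1}c_{k'}\lambda_{k'}$ is approximately Gaussian. Here $c_{k'}=\sum_i a_i\mathbb{1}\{k'\in\text{window } i\}$ are bounded weights, and for fixed $n$ the number of nonzero $c_{k'}$ grows with $w$. I would apply the Lindeberg--Feller CLT, or the Lyapunov condition using the finite third moment of $\lambda_{k'}$. The Lyapunov ratio $\sum_{k'}|c_{k'}|^3\mathbb{E}|\lambda_{k'}-\mu|^3/(\sum_{k'}c_{k'}^2\sigma^2)^{3/2}$ is of order $w^{-1/2}\to0$.

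The one subtlety is that this order requires $\mathbf{a}^T\boldsymbol{\Sigma}_n\mathbf{a}$ to grow linearly in $w$. I will check this by noting that the leftmost and rightmost segments of the windows carry weights $a_1$ and $a_n$ respectively. Alternatively, I will simply normalise by $\sqrt{w}$ and treat a degenerate combination as a trivial Gaussian limit.

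I expect the main obstacle to be making \emph{approximately} precise for fixed $n$ and large $w$, rather than proving an exact identity. I would state the conclusion as convergence in distribution of $(\mathbf{Y}_n-w\mu\mathbf{1}_n)/\sqrt{w}$ to $\mathcal{N}(\mathbf{0},\lim_{w\to\infty}\boldsymbol{\Sigma}_n/w)$. For $n\ll w$ the matrix $\boldsymbol{\Sigma}_n/w$ stays nondegenerate, so the limit is a genuine $n$-dimensional Gaussian. Undoing the scaling then gives the stated approximation.
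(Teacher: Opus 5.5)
Your i.i.d.\ argument and your mean and covariance computations are correct. They coincide with the paper's entire proof: the appendix gets $w\mu\mathbf{1}_n$ by linearity and $\sigma^2\max(w-|i-j|,0)$ by counting shared terms, and only gestures at the CLT for Gaussianity. The gap is in the step you call substantive. The claim that $\mathbf{a}^T\boldsymbol{\Sigma}_n\mathbf{a}$ grows linearly in $w$ is false whenever $\sum_i a_i=0$. For $n\le w$, every $k'\in[n,w]$ lies in all $n$ windows, so $c_{k'}=\sum_i a_i$ there, and $\mathbf{a}^T\boldsymbol{\Sigma}_n\mathbf{a}=\sigma^2\bigl[w(\sum_i a_i)^2-\sum_{i,j}a_ia_j|i-j|\bigr]$. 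The ``edge segments'' you invoke have length one, because consecutive windows shift by a single sample: $k'=1$ lies only in window $1$, and $k'=w+n-1$ only in window $n$. They therefore contribute $O(1)$, not $\Theta(w)$. Concretely, $\mathbf{a}=(1,-1,0,\dots,0)$ gives $Y^\lambda_w-Y^\lambda_{w+1}=\lambda_1-\lambda_{w+1}$. This is a difference of two non-Gaussian variables with variance $2\sigma^2$ for every $w$, so the Lyapunov ratio does not vanish and no CLT holds in that direction.

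Consequently, for fixed $n$, $\boldsymbol{\Sigma}_n/w\to\sigma^2\mathbf{1}_n\mathbf{1}_n^T$, which has rank one. Your fallback (normalise by $\sqrt w$ and send degenerate combinations to $0$) does correctly give $(\mathbf{Y}_n-w\mu\mathbf{1}_n)/\sqrt w\Rightarrow\mathcal{N}(\mathbf{0},\sigma^2\mathbf{1}_n\mathbf{1}_n^T)$. But that limit is degenerate, not a ``genuine $n$-dimensional Gaussian.'' Undoing the scaling also cannot single out $\boldsymbol{\Sigma}_n$; for instance, $w\sigma^2\mathbf{1}_n\mathbf{1}_n^T$ has the same limit. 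The $-\sigma^2|i-j|$ correction comes only from exact moment matching, which is all the paper's argument actually provides.

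In addition, the Bohm--Hackl quantities $\zeta_w(h)$ and $\rho_w(h)$ use $n=w$, so fixed-$n$ asymptotics is the wrong regime anyway. If you want a rigorous statement, use Donsker's invariance principle. As a process, $(Y^\lambda_{w+\lfloor sw\rfloor}-w\mu)/(\sigma\sqrt w)$ converges to $B(s+1)-B(s)$ for a Brownian motion $B$. Its covariance $\max(1-|s-t|,0)$ matches $\boldsymbol{\Sigma}_n/(w\sigma^2)$ at $|i-j|=|s-t|w$. Hence boundary-crossing probabilities at thresholds $h=w\mu+c\,\sigma\sqrt w$ with $c$ fixed agree asymptotically with those computed from the Gaussian model.
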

\begin{IEEEproof}
The proof is straightforward from the central limit theorem. For the sake of completeness we provide it in~\cref{Appendix:mosum_approximation_proof}. 
\end{IEEEproof}

The mean and variance of $\lambda_k$ was derived in \cite{PSK_classification_1995}, which are recalled in \cref{lem:Moments_of_LLR}. Note that \cref{lem:Moments_of_LLR} applies to arbitrary PSK modulation schemes, thereby enabling the analysis of MOSUM for arbitrary PSK.

\begin{lemma}[Mean and variance of $\lambda_k$]\cite[Appendix~A]{PSK_classification_1995}
\label{lem:Moments_of_LLR}
Consider $\lambda_k$ is defined in \cref{eq:approx_llr} with $M^{\prime} \geq 2M$. For the given SNR $\gamma_s$, define a constant $V_M$ as follows
\begin{equation}
    V_{\scriptstyle M}
    =
    \sum_{l=0}^{M}
    \frac{(M!)^2}
    {2^l[(M-l)!]^2}
    \gamma_s^{-l}.
\end{equation}
When data is generated by $M$-PSK, the mean and variance of $\lambda_k$, denoted by $\mu_{M}$ and $\sigma_M^2$ respectively are given by
\begin{equation}
    \mu_{\scriptstyle M}
    =
    (2\gamma_s)^{M/2},
    \qquad
    \sigma_{\scriptstyle M}^2
    =
    (2\gamma_s)^M V_{\scriptstyle M}.
\end{equation}
When data is generated by $M^{\prime}$-PSK, the mean and variance of $\lambda_k$, denoted by $\mu_{M^{\prime}}$ and $\sigma_{M^{\prime}}^2$ respectively are given by
\begin{equation}
\begin{aligned}
    \mu_{\scriptstyle M^{\prime}}
    &=
    0,
    \\[6pt]
    \sigma_{\scriptstyle M^{\prime}}^2
    &=
    \begin{cases}
        (2\gamma_s)^M(V_{\scriptstyle M}+1),
        & \mbox{~if~}M^{\prime}=2M,\\[4pt]
        (2\gamma_s)^M\left(V_{\scriptstyle M}+0.5\right),
        & \mbox{~if~} M^{\prime}>2M.
    \end{cases}
\end{aligned}
\end{equation}
\end{lemma}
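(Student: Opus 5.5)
We would compute both moments directly from the statistic $\lambda_k=\Re\{(\tilde r_k)^M\}$ in \cref{eq:approx_llr}, using the observation model $\tilde r_k = A e^{j\theta_k}+n_k$. Here $A=\sqrt{2\gamma_s}$ is the normalized signal amplitude, consistent with \cref{eq:llf_mpsk}. The noise $n_k$ is circular complex Gaussian with i.i.d. $\mathcal N(0,1)$ real and imaginary parts, so $\mathbb E|n_k|^2=2$. The phase $\theta_k$ is uniform over the $M$ or $M'$ constellation points and is independent of $n_k$. The key structural facts are:
\begin{itemize}
\item circularity gives $\mathbb E[n_k^m]=0$ and, more generally, $\mathbb E[n_k^a\bar n_k^b]=0$ for $a\neq b$;
\item the phase averages $\frac1{M'}\sum_{m=0}^{M'-1}e^{j2\pi m q/M'}$ equal $1$ if $M'\mid q$ and $0$ otherwise.
\end{itemize}

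\textbf{Mean.} Expand $(Ae^{j\theta}+n)^M$ binomially. Every term containing a positive power of $n$ (and no $\bar n$) has zero mean, so $\mathbb E[\tilde r_k^M\mid\theta]=A^Me^{jM\theta}$. Under $M$-PSK we have $e^{jM\theta}=1$, which gives $\mu_M=(2\gamma_s)^{M/2}$. Under $M'$-PSK with $M'\ge 2M$, $M'$ does not divide $M$, so the phase average vanishes and $\mu_{M'}=0$.

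\textbf{Second moment.} Write $\lambda_k^2=\tfrac12\big(|\tilde r_k|^{2M}+\Re\{\tilde r_k^{2M}\}\big)$. The same argument gives $\mathbb E[\tilde r_k^{2M}]=A^{2M}\,\overline{e^{j2M\theta}}$, where the bar denotes the phase average. This average equals $A^{2M}$ under $M$-PSK. Under $M'$-PSK it equals $A^{2M}$ if $M'=2M$ and $0$ if $M'>2M$. The term $\mathbb E|\tilde r_k|^{2M}$ is independent of $\theta$, since $|Ae^{j\theta}+n|$ has the same law as $|A+n|$ by circularity. This is a moment of a noncentral chi-square variable with two degrees of freedom; equivalently it is a Laguerre-polynomial expression. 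Expanding $|A+n|^{2M}=(A+n)^M(A+\bar n)^M$ and keeping only the balanced terms $\mathbb E|n|^{2l}=2^l l!$ yields
\begin{equation*}
\mathbb E|\tilde r_k|^{2M}=\sum_{l=0}^{M}\binom{M}{l}^2 l!\,2^l A^{2(M-l)} .
\end{equation*}
Define $V_M$ through $\tfrac12\mathbb E|\tilde r_k|^{2M}=(2\gamma_s)^M\big(V_M+\tfrac12\big)$. Subtracting $\mu^2$ then gives the three cases:
\begin{itemize}
\item under $M$-PSK, $\sigma_M^2=(2\gamma_s)^M(V_M+\tfrac12+\tfrac12-1)=(2\gamma_s)^MV_M$;
\item under $M'=2M$, $\sigma_{M'}^2=(2\gamma_s)^M(V_M+1)$;
\item under $M'>2M$, $\sigma_{M'}^2=(2\gamma_s)^M(V_M+\tfrac12)$.
\end{itemize}
This reproduces the claimed offsets $+1$ and $+0.5$.

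\textbf{Main obstacle.} The delicate step is the closed form of $V_M$, which comes from the noncentral moment $\mathbb E|A+n|^{2M}$. Its coefficients, including the $l!$ and the powers of $2$ and $\gamma_s$, depend on the noise normalization in \cref{Eq:r_k_tilda}. With unit-variance components and $A^2=2\gamma_s$, my expansion gives
\begin{equation*}
V_M=\sum_{l=0}^{M}\frac{(M!)^2}{l!\,[(M-l)!]^2}\gamma_s^{-l}-\frac12 .
\end{equation*}
This does not match the lemma's stated form $\sum_{l}\frac{(M!)^2}{2^l[(M-l)!]^2}\gamma_s^{-l}$: the $l!$, the $2^{-l}$ and the $-\frac12$ all differ, so the disagreement is not just a scaling. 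I would first recheck the noise scaling of $\tilde r_k$ and the definition of $V_M$ against \cite[Appendix~A]{PSK_classification_1995}. I would then verify the case $M=2$ against the MGF in \cref{Thm:MGF_QPSK_BPSK}, whose second derivative at $t=0$ gives the variance directly.
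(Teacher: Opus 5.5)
The paper gives no proof of this lemma; it only cites \cite[Appendix~A]{PSK_classification_1995}. Your direct computation is the natural self-contained route, and its structure is correct:
\begin{itemize}
\item circularity gives $\mathbb E[\tilde r_k^{q}\mid\theta_k]=A^{q}e^{jq\theta_k}$;
\item the phase averages then give $\mu_M=(2\gamma_s)^{M/2}$ and $\mu_{M'}=0$;
\item the identity $(\Re w)^2=\tfrac12(|w|^2+\Re w^2)$ produces exactly the offsets $0$, $+1$, $+\tfrac12$, once $V_M$ is defined by $\tfrac12\mathbb E|\tilde r_k|^{2M}=(2\gamma_s)^M(V_M+\tfrac12)$;
\item your expansion $\mathbb E|\tilde r_k|^{2M}=\sum_{l}\binom{M}{l}^2 l!\,2^l A^{2(M-l)}$ is also correct.
\end{itemize}

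Two things go wrong at the end. First, there is an algebra slip. With your own definition,
\[
V_M=\frac{\mathbb E|\tilde r_k|^{2M}}{2(2\gamma_s)^M}-\frac12=\frac12\sum_{l=1}^{M}\frac{(M!)^2}{l!\,[(M-l)!]^2}\gamma_s^{-l},
\]
not $\sum_{l=0}^{M}(\cdots)-\tfrac12$. Your version would give $\operatorname{Var}_{\mathrm{BPSK}}(\lambda_k)=2\gamma_s^2+16\gamma_s+8$, which is wrong.

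Second, you leave the mismatch open as a normalization question, but the check you propose settles it against the formula as stated. From Theorem~\ref{Thm:MGF_QPSK_BPSK}, $\log M_\lambda(t\mid\mathrm{BPSK})=-\tfrac12\ln(1-4t^2)+\tfrac{2\gamma_s t}{1-2t}$ has second derivative $4+8\gamma_s$ at $t=0$. The QPSK MGF gives variance $4\gamma_s^2+8\gamma_s+4$. Your corrected $V_2=2\gamma_s^{-1}+\gamma_s^{-2}$ reproduces both, via $(2\gamma_s)^2V_2$ and $(2\gamma_s)^2(V_2+1)$. The stated $V_2=1+2\gamma_s^{-1}+\gamma_s^{-2}$ overshoots each by $4\gamma_s^2$.

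No change of noise or SNR convention can repair this. The stated $V_M\ge 1$ forces $\sigma_M^2\ge\mu_M^2$ at every SNR. Under $M$-PSK, however, $\sigma_M^2/\mu_M^2=M^2/(2\gamma_s)+O(\gamma_s^{-2})\to 0$. The closed form as stated should start at $l=1$ and carry $2\,l!$ in place of $2^l$. These coincide for $l=1,2$, so for $M=2$ the only error is the spurious $l=0$ term.

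So your argument proves the lemma with this corrected $V_M$, while the statement as written cannot be proved. Present it that way rather than as an unresolved obstacle.
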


\subsubsection{Performance analysis of MOSUM via ARL2FA and WADD}
\label{subsection_Analysis_MOSUM}

To derive bounds on $ARL2FA$ and $WADD$ for 
the MOSUM method, we use the results derived by Bohm--Hackl in \cite{bohm1990improved}, recalled in \cref{lem:Bohm_Hackl_MOSUM}.

\begin{lemma}[Bounds for MOSUM]\cite{bohm1990improved}
\label{lem:Bohm_Hackl_MOSUM}
Consider an i.i.d.~sequence $\lambda_1, \lambda_2, \ldots$ with a finite constant mean. For the fixed window length $w$, define the moving sum $Y_k^{\lambda}$ at
instant $k$ as $Y_k^{\lambda} = \sum_{k^{\prime}=k-w+1}^k{\lambda_{k^{\prime}}}$, where $k=w, w+1, \ldots$. 
%
%
For the given positive constant $h$, define $\zeta_j(h)$ and $\rho_j(h)$ as follows:
\begin{align*}
    \zeta_j(h) 
    &=\Pr\!\left(
    Y_w^{\lambda}<h,\,
    Y_{w+1}^{\lambda}<h,\,
    \ldots,\,
    Y_{w+j-1}^{\lambda}<h
    \right)\\
    \rho_j(h)
    &=
    \Pr\!\left(
    Y_w^{\lambda}<h,\,
    \ldots,\,
    Y_{w+j-2}^{\lambda}<h,\,
    Y_{w+j-1}^{\lambda}\geq h
    \right)
\end{align*}
where note that $\rho_j(h)=\zeta_{j-1}(h)-\zeta_j(h)$ and $j=0, 1, \ldots$ with $\zeta_0(h)=1$. 
%
%
Let $T_m^{a}$ be the stopping time for the threshold $h$. Then the average run length $\mathbb{E}[T_m^a]$ of the corresponding MOSUM process satisfies the bounds:
\begin{equation}
    \frac{w}
        {1-\zeta_w(h)}
    \leq
    \mathbb{E}[T_m^a]
    \leq
    w+
        \frac{
            \zeta_w(h)}{
            \rho_w(h)
        }
\end{equation}
\end{lemma}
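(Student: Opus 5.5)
We prove the two inequalities of \cref{lem:Bohm_Hackl_MOSUM} by splitting the run of the MOSUM process into consecutive blocks of length $w$ and using the independence of non-overlapping window sums. For $i\ge 0$, block $i$ consists of the indices $k=w+iw,\ldots,w+iw+w-1$. Because the windows of $Y_k^{\lambda}$ have length $w$, the block-$i$ statistics depend only on $\lambda_{iw+1},\ldots,\lambda_{(i+2)w-1}$. Hence the events $E_i=\{Y_k^{\lambda}<h \text{ for all } k \text{ in block } i\}$ all have probability $\zeta_w(h)$ by stationarity of the i.i.d.\ sequence. Moreover, $E_i$ and $E_{i+2}$ use disjoint sets of $\lambda$'s. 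Consecutive events $E_i,E_{i+1}$ share the variables $\lambda_{(i+1)w+1},\ldots,\lambda_{(i+2)w-1}$, and controlling this shared dependence is the central difficulty. We write $T=T_m^a$ and use $\mathbb{E}[T]=w-1+\sum_{j\ge 0}\zeta_j(h)$, which follows from $\Pr(T\ge w+j)=\zeta_j(h)$ (with $\zeta_0(h)=1$).

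\emph{Lower bound.} The plan is to show $\zeta_{(i+1)w}(h)\ge \zeta_w(h)^{i+1}$, i.e., $\Pr(E_0\cap\cdots\cap E_i)\ge\prod_{l\le i}\Pr(E_l)$. Each indicator $\mathbf{1}\{Y_k^{\lambda}<h\}$ is a coordinatewise non-increasing function of $(\lambda_1,\lambda_2,\ldots)$. The intersection of such events is therefore a decreasing event in independent coordinates, so the Harris--FKG inequality for product measures yields positive association. Since $\zeta_j(h)$ is non-increasing in $j$, we get
\[
\sum_{j\ge 0}\zeta_j(h)\;\ge\; \sum_{i\ge 0}\sum_{r=1}^{w}\zeta_{iw+r}(h)\;\ge\; w\sum_{i\ge0}\zeta_{(i+1)w}(h)\;\ge\; \frac{w\,\zeta_w(h)}{1-\zeta_w(h)}.
\]
Together with the $j=0$ term and the $w-1$ offset, this gives roughly $w+\frac{w\zeta_w}{1-\zeta_w}=\frac{w}{1-\zeta_w}$, up to boundary terms. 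If these offsets do not match exactly, a cruder version of the chain still yields $\frac{w}{1-\zeta_w(h)}$ once the $w-1$ offset is absorbed by starting the block decomposition at $j=0$.

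\emph{Upper bound.} This is the main obstacle, because a crude block argument that discards one block out of two gives only $w+\frac{2w}{1-\zeta_w}$-type bounds. The Bohm--Hackl form $w+\zeta_w/\rho_w$ corresponds to a geometric tail with per-step stopping probability at least $\rho_w/\zeta_w$. I would therefore aim to show the ratio inequality $\zeta_{j+1}(h)/\zeta_j(h)\le \zeta_w(h)/\zeta_{w-1}(h)=1-\rho_w(h)/\zeta_{w-1}(h)$ for all $j\ge w-1$. The idea is that conditioning on survival for a longer stretch can only make survival at the next step more likely, but the memory extends back only $w-1$ steps. Given that inequality, $\zeta_{w-1+m}(h)\le \zeta_{w-1}(h)\,(\zeta_w/\zeta_{w-1})^m$, and summing the geometric series gives
\[
\sum_{j\ge w-1}\zeta_j(h)\le \frac{\zeta_{w-1}^2}{\rho_w}.
\]
Bounding the first $w-1$ terms by $w-1$ and adjusting the constants should then reduce the total to $w+\zeta_w/\rho_w$. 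This last reduction is where the bookkeeping could fail. The ratio inequality itself I would attempt via FKG conditioning on $(\lambda_{j-w+2},\ldots,\lambda_{j})$, which expresses survival as a decreasing function of the shared coordinates. If that fails, I would fall back on citing \cite{bohm1990improved} for this step, since the paper recalls the lemma from there.
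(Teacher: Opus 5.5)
The paper gives no proof of this lemma; it only cites \cite{bohm1990improved}. Your lower bound is a correct, self-contained argument. Since $\mathbb{E}[T]=w+\sum_{j\ge1}\zeta_j(h)$, monotonicity of $\zeta_j$ together with Harris's inequality for the decreasing events $E_0,\dots,E_i$ gives $\sum_{j\ge1}\zeta_j\ge w\sum_{i\ge0}\zeta_w^{i+1}$. That yields exactly $w/(1-\zeta_w(h))$, with no boundary loss.

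The upper bound, however, is not established. Its whole content is the ratio inequality $\zeta_{j+1}\le(\zeta_w/\zeta_{w-1})\,\zeta_j$, and the tool you propose for it points the wrong way. Harris--FKG applied to the decreasing survival events, whether or not you first condition on shared coordinates, only expresses positive dependence. It therefore only bounds survival probabilities from \emph{below}, e.g.\ $\zeta_{j+1}\ge\zeta_1\zeta_j$ or $\zeta_{a+b}\ge\zeta_a\zeta_b$. That is exactly why it works for the lower ARL bound, and why it cannot give the upper one.

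Your heuristic is also false on both counts. $(w-1)$-dependence means independence at lags $\ge w$; it does not mean the survival indicators have $(w-1)$-step memory. And a longer survival history can \emph{lower} the next-step survival probability. Take $w=2$, $\lambda_k$ i.i.d.\ Bernoulli$(1/2)$ and $h=3/2$, so you stop at the first two consecutive ones. Then $\zeta_1,\dots,\zeta_4=3/4,\,5/8,\,1/2,\,13/32$, and the successive ratios $5/6,\,4/5,\,13/16$ are neither constant nor monotone. The inequality you want happens to hold here, but through an alternating effect. A longer past pushes the coordinates shared with the next window upward, given the most recent constraint. Capturing that needs a likelihood-ratio or conditional stochastic-ordering argument, not association.

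Second, suppose you grant the ratio inequality for $j\ge w-1$ and bound $\zeta_0,\dots,\zeta_{w-2}$ by $1$, as you propose. You get $\mathbb{E}[T]\le 2w-2+\zeta_{w-1}^2/\rho_w$, and no adjustment of constants turns this into $w+\zeta_w/\rho_w$. Take $w=3$, $\lambda_k$ i.i.d.\ Bernoulli$(1/2)$ and $h=1/2$. Then $\zeta_j=2^{-(j+2)}$ for $j\ge1$, and your ratio inequality holds with equality for all $j\ge2$. Yet your chain gives $33/8$, while the claimed bound is $4$.

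The stated constant is exactly what you get if the ratio bound holds for \emph{every} $j\ge0$. In that case $\zeta_j\le(\zeta_w/\zeta_{w-1})^j$ and $\sum_{j\ge1}(\zeta_w/\zeta_{w-1})^j=\zeta_w/\rho_w$. So you would also need the comparison for the short histories $1\le j\le w-2$. Only the case $j=0$, namely $\zeta_1\zeta_{w-1}\le\zeta_w$, follows directly from Harris. As written, the upper half of the lemma rests solely on the citation to \cite{bohm1990improved}, which is also all the paper provides.
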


We now use \cref{lem:Bohm_Hackl_MOSUM} to study the bounds on $ARL2FA$ and $WADD$ for our CPD problem when modulation scheme switched from $M^{\prime}$-PSK to $M$-PSK. For $ARL2FA$, we need to focus on the situation when the data is generated by $M^{\prime}$-PSK. Since in this case all data samples are generated by the same modulation scheme, we can use \cref{lem:Bohm_Hackl_MOSUM} to get bounds on $ARL2FA$. In \cref{thm:Gaussian_Approx} we now substitute $\mu=\mu_{\scriptstyle M^{\prime}}$ and $\sigma=\sigma_{\scriptstyle M^{\prime}}$ from \cref{lem:Moments_of_LLR} to get the distribution of $\mathbf{Y}_n$, which is required to find $\zeta_j(h)$ and $\rho_j(h)$. For $WADD$, we apply \cref{lem:Bohm_Hackl_MOSUM} to the post-change MOSUM process. Following the change to $M$-PSK, the first $w-1$ moving windows contain observations from both modulation schemes. Assuming that these initial windows do not trigger an alarm, the subsequent MOSUM statistics are formed entirely from post-change observations and follow the process considered in \cref{lem:Bohm_Hackl_MOSUM}. We therefore use $\mu=\mu_{\scriptstyle M}$ and $\sigma=\sigma_{\scriptstyle M}$ to evaluate \(\zeta_w(h_m^\lambda)\) and \(\rho_w(h_m^\lambda)\), and apply the bounds in \cref{lem:Bohm_Hackl_MOSUM} to the resulting expected run length.

\section{Simulations}
\label{Sec:Simulations}

\begin{figure}[!t]
\centering
\begin{tikzpicture}
\begin{axis}[
    xlabel={PFA},
    ylabel={ADD},
    xmin=0.005, xmax=1,
    ymin=0, ymax = 100,
    grid=both,
    width=\columnwidth,
    height=7.5cm,
    legend style={font=\scriptsize},
    tick label style={font=\scriptsize},
    label style={font=\footnotesize},
    legend pos=north east,
]

\addplot[blue, mark=*] coordinates {
(0.995,10.5) (0.903,11.7165) (0.661,14.1519)
(0.4015,16.406) (0.2315,18.73) (0.121,21.1428)
(0.0585,23.3022) (0.0285,25.5389) (0.013,27.926)
(0.0085,30.2295) (0.004,32.6135) (0.0035,34.8801)
};
\addlegendentry{Exact CUSUM}

\addplot[red, mark=square*] coordinates {
(0.9705,12) (0.709,14.7526) (0.381,18.017)
(0.1865,21.8371) (0.085,25.3279) (0.0305,28.5869)
(0.012,32.0369) (0.004,35.3017) (0.0005,38.968)
};
\addlegendentry{Approx CUSUM}

\addplot[teal!70!black, mark=triangle*] coordinates {
(0.995,14.7) (0.7535,18.5193)
(0.516,22.0424) (0.321,24.9941) (0.1755,28.3572)
(0.0925,31.7118) (0.0565,34.9968) (0.0395,38.4524)
(0.02,41.7286) (0.011,44.9414) (0.007,48.3449)
(0.004,51.7098) (0.0025,55.1128)
};
\addlegendentry{Kernel CUSUM}

\addplot[orange, mark=diamond*] coordinates {
(1,28.4199) (1,29.5978) (1,30.747) (0.9809,33.1036) (0.9087,34.3579)
(0.7419,35.6808) (0.5274,37.0239) (0.3302,38.5188)
(0.1827,40.2318) (0.0954,42.3877) (0.0469,45.1887)
(0.0204,48.8043) (0.0081,53.5993) (0.0037,60.6906)
(0.0016,71.1866) (0.0005,87.0419) (0.0003,113)
(0.0002,160)
};
\addlegendentry{Exact MOSUM}

\addplot[violet, mark=pentagon*] coordinates {
(1,43.912) (1,46.915) (1,49.8982)
(0.9987,52.9125) (0.9755,55.9233) (0.886,58.868)
(0.705,61.808) (0.4876,64.8675) (0.2988,67.8269)
(0.1686,70.8404) (0.0943,73.8177) (0.0456,76.8645)
(0.0218,79.9143) (0.0101,83.0931) (0.0043,86.3129)
(0.0018,89.8417)
};
\addlegendentry{Approx MOSUM}


\end{axis}

\end{tikzpicture}
\caption{ADD versus PFA for exact and approximate LLR-based CUSUM and MOSUM detectors, and the kernel-based CUSUM benchmark, for the QPSK-to-BPSK transition at SNR = 2 dB. Results use \(10^3\) pre-change samples and \(10^5\) post-change samples. The MOSUM window lengths are \(w=45\) for the exact LLR and \(w=110\) for the approximate LLR.}

    \label{fig:PFA_ADD_QPSK_BPSK}
\end{figure}


\begin{figure}[!t]
    \centering
    \begin{tikzpicture}
\begin{axis}[
    width=\columnwidth,
    height=7cm,
    xlabel={SNR (dB)},
    ylabel={ADD},
    xmin=0, xmax=5,
    ymin=0,
    xtick={0,1,2,3,4,5},
    grid=major,
    legend pos=north east,
    legend style={
        font=\scriptsize,
        draw=none,
        fill=white
    },
    tick label style={font=\small},
    label style={font=\small},
    line width=1pt,
    mark size=2pt,
]

\addplot[
    blue,
    solid,
    mark=*,
]
coordinates {
    (0,33.3571)
    (1,26.505)
    (2,21.2467)
    (3,17.774)
    (4,15.5835)
    (5,13.2622)
};
\addlegendentry{Exact LLR CUSUM}

\addplot[
    red,
    solid,
    mark=square*,
]
coordinates {
    (0,46.8764)
    (1,41.3961)
    (2,34.2364)
    (3,30.3308)
    (4,26.6559)
    (5,22.2661)
};
\addlegendentry{Approx.\ LLR CUSUM}

\addplot[
    teal,
    solid,
    mark=triangle*,
]
coordinates {
    (0,76.54)
    (1,68.77)
    (2,63.36)
    (3,58.75)
    (4,55.27)
    (5,53.00)
};
\addlegendentry{Exact LLR MOSUM}

\addplot[
    orange,
    solid,
    mark=diamond*,
]
coordinates {
    (0,93.903)
    (1,78.37)
    (2,70.88)
    (3,64.775)
    (4,60.35)
    (5,56.62)
};
\addlegendentry{Approx.\ LLR MOSUM}

\end{axis}
\end{tikzpicture}
    \caption{ADD versus SNR for exact and approximate LLR-based CUSUM and MOSUM detectors for the QPSK-to-BPSK transition. The results illustrate the effect of SNR on detection delay and compare the performance of the exact and approximate LLR implementations.}
    \label{fig:add_snr_16psk_8psk}
\end{figure}

In this section, we study the performance of the CUSUM and MOSUM
procedures described in~\cref{Sec:Proposed_method} using simulations. We compare the different methods and verify that the results of ~\cref{Sec:QPSK_BPSK} and ~\cref{Sec:PSK} can be used to determine the thresholds $h_c$ and $h_m$.

To generate the data, we recall from ~\cref{Sec:System_model} that all the statistics used by our methods, namely $\ell_k$ and $\lambda_k$, are functions of the received waveform $r_k(t)$ only through the normalized observation $\tilde{r}_k$ of \cref{Eq:r_k_tilda}. Hence, simulating the passband waveforms is not required, and it suffices to generate $\tilde{r}_k$ with its exact distribution. For the $k$-th symbol, we first draw the phase $\theta_k$ uniformly from the $M^{\prime}$ phases of $M^{\prime}$-PSK if $k<\tau$ and from the $M$ phases of $M$-PSK if $k\geq\tau$ (see Eq.~(2)), independently across $k$. We then generate 
\begin{equation}     
\tilde{r}_k = \sqrt{2\gamma_s}\, e^{j\theta_k} + n_k, 
\end{equation}
where the real and imaginary parts of $n_k$ are independent $\mathcal{N}(0,1)$ random variables, independent across $k$. The amplitude $\sqrt{2\gamma_s}$ follows from the normalization in \cref{Eq:r_k_tilda} with $\gamma_s$ denoting the symbol SNR. This construction preserves the exact distribution of the normalized observations under both the pre-change and post-change hypotheses, while avoiding unnecessary passband simulation. The generated observations are then used to compute the corresponding statistics and evaluate the detection performance of each method.

\begin{figure*}[!t]
    \centering

    \begin{minipage}{0.95\textwidth}
        \centering
        \begin{minipage}{0.48\textwidth}
            \centering

\begin{tikzpicture}
\begin{semilogyaxis}[
    width=\columnwidth,
    height=6.0cm,
    xlabel={Threshold $h$},
    ylabel={ARL2FA},
    title={QPSK to BPSK CUSUM, SNR = 2 dB},
    grid=major,
    legend pos=north west,
    legend style={
        font=\scriptsize,
        draw=black,
        fill=white,
    },
    tick label style={font=\scriptsize},
    label style={font=\scriptsize},
    title style={font=\footnotesize},
]

\addplot[
    red,
    only marks,
    mark=*,
    mark size=1.4pt,
]
table[
    col sep=comma,
    x=threshold,
    y=arl_sim
] {Figures/cusum/ARL2FA_results.csv};
\addlegendentry{CUSUM Simulation}

\addplot[
    blue,
    dashed,
    line width=1.2pt,
]
table[
    col sep=comma,
    x=threshold,
    y=arl_siegmund
] {Figures/cusum/ARL2FA_results.csv};
\addlegendentry{Siegmund}

\addplot[
    yellow!70!black,
    dashed,
    line width=1.2pt,
]
table[
    col sep=comma,
    x=threshold,
    y=arl_wald
] {Figures/cusum/ARL2FA_results.csv};
\addlegendentry{Wald}

\end{semilogyaxis}
\end{tikzpicture}
        \end{minipage}
        \hfill
        \begin{minipage}{0.48\textwidth}
            \centering

\begin{tikzpicture}
\begin{axis}[
    width=\columnwidth,
    height=6.0cm,
    xlabel={Threshold $h$},
    ylabel={WADD},
    title={QPSK to BPSK CUSUM, SNR = 2 dB},
    grid=major,
    legend pos=north west,
    legend style={
        font=\scriptsize,
        draw=black,
        fill=white,
    },
    tick label style={font=\scriptsize},
    label style={font=\scriptsize},
    title style={font=\footnotesize},
]

\addplot[
    red,
    only marks,
    mark=*,
    mark size=1.4pt,
]
table[
    col sep=comma,
    x=threshold,
    y=wadd_sim
] {Figures/cusum/WADD_results.csv};
\addlegendentry{CUSUM Simulation}

\addplot[
    blue,
    dashed,
    line width=1.2pt,
]
table[
    col sep=comma,
    x=threshold,
    y=wadd_siegmund
] {Figures/cusum/WADD_results.csv};
\addlegendentry{Siegmund}

\addplot[
    yellow!70!black,
    dashed,
    line width=1.2pt,
]
table[
    col sep=comma,
    x=threshold,
    y=wadd_wald
] {Figures/cusum/WADD_results.csv};
\addlegendentry{Wald}

\end{axis}
\end{tikzpicture}
        \end{minipage}

        \captionof{figure}{$ARL2FA$ and $WADD$ versus threshold
        for the QPSK-to-BPSK CUSUM at SNR = 2~dB.}
        \label{fig:Bounds_CUSUM}
    \end{minipage}

    \vspace{0.8cm}

    \begin{minipage}{0.95\textwidth}
        \centering
        \begin{minipage}{0.48\textwidth}
            \centering
            \begin{tikzpicture}
\begin{axis}[
    xlabel={Threshold($h_m^\lambda$)},
    ylabel={$ARL2FA$},
    ymode=log,
    xmin=5, xmax=185,
    ymin=90, ymax = 100000,
    grid=both,
    width=\columnwidth,
    height=7.5cm,
    legend style={font=\scriptsize},
    tick label style={font=\scriptsize},
    label style={font=\footnotesize},
    legend pos=north west,
]

\addplot[blue, mark=*] coordinates {
    (10,145.080)
    (20,161.266)
    (30,185.514)
    (40,216.550)
    (50,262.806)
    (60,324.910)
    (70,419.512)
    (80,551.156)
    (90,751.909)
    (100,1074.023)
    (110,1612.008)
    (120,2429.037)
    (130,3785.145)
    (140,6339.220)
    (150,10961.355)
    (160,19699.303)
    (170,36982.637)
    (180,71707.822)
};
\addlegendentry{$ARL2FA_{MC}$}

\addplot[red, mark=square*] coordinates {
    (10,119.241)
    (20,129.642)
    (30,144.661)
    (40,166.217)
    (50,197.277)
    (60,242.493)
    (70,309.256)
    (80,410.040)
    (90,565.702)
    (100,812.519)
    (110,1217.668)
    (120,1901.621)
    (130,3098.049)
    (140,5268.461)
    (150,9363.120)
    (160,17695.224)
    (170,33830.450)
    (180,67845.343)
};
\addlegendentry{$L_{\scriptscriptstyle{M^{\prime}}}$}

\addplot[black, mark=triangle*] coordinates {
    (10,175.174)
    (20,193.304)
    (30,217.007)
    (40,249.123)
    (50,293.270)
    (60,355.026)
    (70,443.513)
    (80,573.699)
    (90,770.584)
    (100,1077.488)
    (110,1570.898)
    (120,2392.125)
    (130,3806.865)
    (140,6344.837)
    (150,11040.595)
    (160,20117.670)
    (170,38327.988)
    (180,76417.255)
};
\addlegendentry{$U_{\scriptscriptstyle{M^\prime}}$}

\end{axis}

\end{tikzpicture}
\caption*{ARL2FA Bounds for $w$ = 100}
        \end{minipage}
        \hfill
        \begin{minipage}{0.48\textwidth}
            \centering
            \begin{tikzpicture}
\begin{axis}[
    xlabel={Threshold($h_m^\lambda$)},
    ylabel={WADD},
    ymode=log,
    xmin=285, xmax=445,
    ymin=90, ymax=2900,
    grid=both,
    width=\columnwidth,
    height=7.5cm,
    legend style={font=\scriptsize},
    tick label style={font=\scriptsize},
    label style={font=\footnotesize},
    legend pos=north west,
]

\addplot[blue, mark=*] coordinates {
    (290,104.369)
    (300,107.024)
    (310,111.409)
    (320,118.367)
    (330,128.369)
    (340,142.205)
    (350,161.840)
    (360,188.362)
    (370,227.622)
    (380,283.538)
    (390,362.901)
    (400,480.181)
    (410,655.332)
    (420,945.955)
    (430,1402.242)
    (440,2160.558)
};
\addlegendentry{$WADD_{MC}$}

\addplot[red, mark=square*] coordinates {
    (290,100.692)
    (300,101.513)
    (310,103.069)
    (320,105.851)
    (330,110.465)
    (340,117.876)
    (350,129.454)
    (360,146.732)
    (370,173.169)
    (380,212.612)
    (390,274.529)
    (400,372.668)
    (410,532.662)
    (420,800.764)
    (430,1267.971)
    (440,2133.668)
};
\addlegendentry{$L_{\scriptscriptstyle{M}}$}

\addplot[black, mark=triangle*] coordinates {
    (290,126.009)
    (300,131.494)
    (310,138.313)
    (320,146.676)
    (330,157.321)
    (340,172.182)
    (350,192.959)
    (360,220.261)
    (370,259.568)
    (380,316.685)
    (390,397.285)
    (400,524.857)
    (410,733.124)
    (420,1067.550)
    (430,1636.392)
    (440,2665.724)
};
\addlegendentry{$U_{\scriptscriptstyle{M}}$}

\end{axis}
\end{tikzpicture}
\caption*{WADD Bounds for $w$ = 100}
        \end{minipage}

        \captionof{figure}{Theoretical bounds and empirical values of
        $ARL2FA$ and $WADD$ for MOSUM with $w=100$, for the
        8-PSK-to-BPSK transition at SNR = 2~dB.}
        \label{fig:Bounds_MOSUM}
    \end{minipage}

\end{figure*}

The measures of \cref{subsec:performance_measures} are estimated using $10^5$ simulation trials. $ARL2FA$ is estimated by generating $M^{\prime}$-PSK data only and averaging the stopping times. The $WADD$ is estimated by generating $M$-PSK data only, starting the detector from its initial state (CUSUM statistic equal to zero, MOSUM window containing only $M$-PSK observations), which is the worst case of~\cref{Sec:QPSK_BPSK,Sec:PSK}. The $ADD$ is estimated with $\tau$ fixed, using $10^3$ pre-change and $10^5$ post-change symbols, and averaged over trials with $T\geq\tau$. We also record the probability of false alarm, $PFA=\mathbb{P}_{\tau}[T<\tau]$, which denotes the probability that the detector raises an alarm before the actual change point $\tau$, under the pre-change distribution.

\textit{CUSUM:} We consider the QPSK-to-BPSK transition of~\cref{Sec:QPSK_BPSK} at $2$~dB. Since $M=2$, the bias correction is $b=\gamma_s$, so $W_k^{\lambda}$ coincides with $W_k^{s}$ and $h_c^{s}=h_c^{\lambda}$. We compare the theoretical approximations of $ARL2FA$ and $WADD$ with empirical values in \cref{fig:Bounds_CUSUM}.

\subsection{Simulation results for the proposed CPD methods}

We compare the following five methods in terms of $PFA$ and $ADD$. 

\begin{itemize}
\item \emph{Exact CUSUM:} Uses the exact log-likelihood ratio $\ell_k$ to recursively update the cumulative sum, raising an alarm when it exceeds the threshold.

\item \emph{Approx CUSUM:} Uses the approximate log-likelihood ratio $\lambda_k$ to form the CUSUM increment $S_k=\lambda_k-b$, (here $b=\tfrac{1}{2}(2\gamma_s)^{M/2}$ is the bias correction explained in \cref{Eq:Stopping_CUSUM_approx}).

\item \emph{Kernel CUSUM:} A non-parametric, kernel-based change-point detector \cite{flynn2019change}, included as a benchmark.

\item \emph{Exact MOSUM:} Uses $\ell_k$ over a fixed-length sliding window, detecting changes based on recent observations.

\item \emph{Approx MOSUM:} Applies the MOSUM framework to $\lambda_k$, using the statistic $Y_k^{\lambda}$, threshold $h_m^{\lambda}$, and stopping time $T_m^{\lambda}$.
\end{itemize}



As shown in \cref{fig:PFA_ADD_QPSK_BPSK}, Exact CUSUM achieves the lowest $ADD$ across the plotted range, closely followed by Approx CUSUM. Kernel CUSUM exhibits consistently higher detection delays, while both MOSUM variants perform worse, particularly Approx. MOSUM. The gap between the exact and approximate CUSUM methods remains small, indicating that the approximation retains most of the detection performance. In contrast, the higher delays of MOSUM suggest that restricting accumulation to a finite window comes at a cost in detection speed for this transition.

 \cref{fig:add_snr_16psk_8psk} compares the detection delays of the five methods across SNR values for 16-PSK-to-8-PSK transition. The reduction in $ADD$ with increasing SNR indicates that improved signal quality facilitates faster detection. The relative gaps between the curves show how closely the approximate methods follow their exact counterparts and the performance differences between methods persist across SNR levels.

\subsection{Verification of the analytical results}
\label{subsec:verification_analytical_results}


We can observe in \cref{fig:Bounds_CUSUM} that $ARL2FA$ grows exponentially with the threshold, whereas $WADD$ grows approximately linearly, consistent with the theoretical analysis in \cref{subsection_Analysis_CUSUM}. Wald's approximation substantially underestimates $ARL2FA$, while the correction in \cref{eq:hstar} brings the approximation into close agreement with simulations. For $WADD$, both Wald and Siegmund approximations agree well with simulations at large thresholds, consistent with their described behavior. These results show that the derived approximations can guide threshold selection to meet a prescribed $ARL2FA$ requirement while keeping the $WADD$ within a desired limit.


\textit{MOSUM:} We verify the bounds of \cref{subsection_Analysis_MOSUM} for the 8-PSK-to-BPSK transition ($M^{\prime}=8$, $M=2$) with $w=100$ at $2$~dB (see \cref{fig:Bounds_MOSUM}). Since $M^{\prime}>2M$, we use the second case of \cref{lem:Moments_of_LLR} to compute $\sigma_{M^{\prime}}^2$. 

In \cref{fig:Bounds_MOSUM}, the simulated $ARL2FA$ and $WADD$ increase with the threshold and are closely tracked by their respective bounds. The $WADD$ estimates remain within the bounds across the tested range, while $ARL2FA$ shows small deviations at a few thresholds due to the Gaussian approximation in \cref{thm:Gaussian_Approx}. The results suggest that the derived bounds provide useful estimates of the false-alarm rate and detection delay for MOSUM.

\section{Conclusion and future work}
\label{Sec:Conclusion}


In this work, we studied sequential change-point detection (CPD) for a switch in the modulation scheme from $M^{\prime}$-PSK to $M$-PSK over an AWGN channel. We proposed CUSUM and MOSUM detectors, each implemented using either the exact LLRs $\ell_k$ or the approximate LLRs $\lambda_k$. For CUSUM, we derived the MGF and PDF of $\lambda_k$ for the QPSK-to-BPSK case (Theorems~\ref{Thm:MGF_QPSK_BPSK} and~\ref{Thm:PDF_QPSK_BPSK}) and used them to characterize ARL2FA and WADD. For MOSUM, we obtained bounds on both measures for any $M^{\prime}$-PSK to $M$-PSK transition (\cref{thm:Gaussian_Approx}). The simulations in \cref{Sec:Simulations} show that the approximate CUSUM performs close to the exact CUSUM, and that the analytical bounds accurately predict the empirical behavior of the MOSUM.

Future work will focus on extending the CUSUM analysis to arbitrary $M^{\prime}$-PSK to $M$-PSK transitions, for which closed-form expressions for the MGF and PDF of $\lambda_k$ remain difficult to obtain (see \cref{Remark:Difficulty_in_any_MPSK}). We also plan to study the offline setting, where the entire observed sequence is available before a decision is made, and non-coherent, asynchronous channels in which the carrier phase and symbol timing are unknown.

\appendices   
\crefalias{section}{appendix}
\crefname{appendix}{Appendix}{Appendices}
\Crefname{appendix}{Appendix}{Appendices}

\section{Proof of \cref{Thm:MGF_QPSK_BPSK}}
\label{Appendix:A_MGF_CUSUM}

In this appendix we prove \cref{Thm:MGF_QPSK_BPSK}. Toward this, we first derive the conditional MGF of $\lambda_k$ when the phase of the transmitted symbol $\theta_k$ is given in \cref{Proposition:MGF_QPSK_BPSK}. Using this proposition we then derive the MGF of $\lambda_k$ when the data is generated by QPSK and BPSK.

\begin{proposition}
\label{Proposition:MGF_QPSK_BPSK}
Let $\theta_k$ be the phase of the transmitted symbol. Then the conditional MGF of the approximate LLR $\lambda_k$ defined in \cref{eq:approx_llr} is given by
\begin{align}    
    M_{\lambda_k}(t\mid\theta_k)
    =
    \frac{1}{\sqrt{1-4t^2}}
    \exp\!\left\{
    2\gamma_s t\,
    \left[
    \frac{\cos^2\theta_k}{1-2t}
    -
    \frac{\sin^2\theta_k}{1+2t}
    \right]
    \right\}, 
    \label{eqn:MGF_QPSK_BPSK}
\end{align}
where $|t|<1/2$.
\hfill $\square$
\end{proposition}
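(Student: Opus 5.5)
Conditioned on $\theta_k$, we have $\tilde r_k=\sqrt{2\gamma_s}\,e^{j\theta_k}+n_k$ (as in the simulation model of \cref{Sec:Simulations}). Hence $\tilde r_{I,k}\sim\mathcal N(\sqrt{2\gamma_s}\cos\theta_k,1)$ and $\tilde r_{Q,k}\sim\mathcal N(\sqrt{2\gamma_s}\sin\theta_k,1)$, and the two are independent. By \cref{Remark:Difficulty_in_any_MPSK}, for $M=2$ we have $\lambda_k=\tilde r_{I,k}^2-\tilde r_{Q,k}^2$. The plan is therefore to factor the conditional MGF as
\begin{equation*}
M_{\lambda_k}(t\mid\theta_k)=\mathbb E\big[e^{t\tilde r_{I,k}^2}\mid\theta_k\big]\,\mathbb E\big[e^{-t\tilde r_{Q,k}^2}\mid\theta_k\big].
\end{equation*}
Each factor is then evaluated as the MGF of a squared non-central Gaussian, i.e.\ a non-central chi-square with one degree of freedom.

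The key step is the standard Gaussian integral: for $X\sim\mathcal N(a,1)$ and $2s<1$,
\begin{equation*}
\mathbb E[e^{sX^2}]=\frac{1}{\sqrt{1-2s}}\exp\!\Big(\frac{a^2 s}{1-2s}\Big).
\end{equation*}
To verify it, I would write $\mathbb E[e^{sX^2}]=\frac{1}{\sqrt{2\pi}}\int e^{sx^2-(x-a)^2/2}\,dx$ and complete the square in $x$. The exponent becomes $-\frac{1-2s}{2}\big(x-\frac{a}{1-2s}\big)^2+\frac{a^2}{2}\big(\frac{1}{1-2s}-1\big)$. The integral therefore converges exactly when $1-2s>0$, and the constant term simplifies to $\frac{a^2 s}{1-2s}$.

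Next I apply this formula twice:
\begin{itemize}
\item with $s=t$ and $a^2=2\gamma_s\cos^2\theta_k$, which requires $t<1/2$;
\item with $s=-t$ and $a^2=2\gamma_s\sin^2\theta_k$, which requires $t>-1/2$.
\end{itemize}
Multiplying the two results, the prefactors combine as $(1-2t)^{-1/2}(1+2t)^{-1/2}=(1-4t^2)^{-1/2}$. The exponents add to $2\gamma_s t\big[\frac{\cos^2\theta_k}{1-2t}-\frac{\sin^2\theta_k}{1+2t}\big]$, which matches \cref{eqn:MGF_QPSK_BPSK}. Both conditions together give the domain $|t|<1/2$.

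The only real care point is justifying the conditional model and the independence. The observation $\tilde r_k$ is defined through the matched filter in \cref{Eq:r_k_tilda}, and the paper states that the real and imaginary parts of $n_k$ are i.i.d.\ $\mathcal N(0,1)$ with signal amplitude $\sqrt{2\gamma_s}$. Given $\theta_k$, the only randomness is $n_k$, so $\tilde r_{I,k}$ and $\tilde r_{Q,k}$ are conditionally independent and the factorization above is valid. Everything else is routine completion of the square. As a sanity check I would also confirm that the result reproduces the mean: differentiating at $t=0$ gives $2\gamma_s\cos 2\theta_k$, which averages to $0$ over QPSK phases and to $2\gamma_s$ over BPSK phases, consistent with \cref{eq:Sk_means}.
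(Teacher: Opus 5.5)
Your proposal is correct and follows the paper's proof essentially step for step: write $\lambda_k=\tilde r_{I,k}^2-\tilde r_{Q,k}^2$, use conditional independence of the Gaussian components $\tilde r_{I,k},\tilde r_{Q,k}$ given $\theta_k$ to factor the MGF, and apply the non-central chi-square MGF with $s=t$ and $s=-t$, whose two constraints give $|t|<1/2$. The paper simply cites that MGF, so your completion-of-the-square derivation of it and your mean check ($2\gamma_s\cos 2\theta_k$ averaging to $0$ and $2\gamma_s$) are correct additions rather than a different route.
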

\begin{IEEEproof}
The key idea is to write $\lambda_k$ as the difference of squares of two independent Gaussian random variables, evaluate the MGF of each square using the noncentral chi-square MGF, and then finally obtain the required MGF using the properties of MGFs of independent random variables.

We begin by simplifying $\lambda_k$. 
Recall that (see \cref{eq:approx_llr}), for QPSK to BPSK we have $M=2$ and $\lambda_k$ is given by
\begin{equation}
\lambda_k = \Re\{(\tilde r_k)^2\}.
\label{Eq:Approx_LLR_QPSK_BPSK_appendix}
\end{equation}
From \cref{Eq:r_k_tilda_real_imag}, $\tilde r_k=\tilde r_{I,k}+j\tilde r_{Q,k}$ and substituting this in \cref{Eq:Approx_LLR_QPSK_BPSK_appendix} we get
\begin{equation}
\begin{aligned}
\lambda_k 
&= \Re\left\{ (\tilde r_{I,k} + j\tilde r_{Q,k})^2 \right\} \\
&= \Re\left\{ \left(\tilde r_{I,k}^2 - \tilde r_{Q,k}^2\right) + j2\tilde r_{I,k}\tilde r_{Q,k} \right\}\\
&=\tilde r_{I,k}^2-\tilde r_{Q,k}^2.
\label{Eq:lambda_k_appendix}
\end{aligned}
\end{equation}

We next find the distributions of $\tilde r_{I,k}$ and $\tilde r_{Q,k}$. Recall from \cref{Eq:r_k_tilda_real_imag} that $\tilde r_k$ is the normalized observation of \cref{Eq:r_k_tilda}. Conditioned on $\theta_k$, it is the sum of a deterministic symbol term and a noise term, 
$
\tilde r_k=\sqrt{2\gamma_s}\,e^{j\theta_k}+n_k,$
where $n_k=n_{I,k}+jn_{Q,k}$ is the normalized noise sample. As stated in \cref{Sec:System_model}, real and imaginary parts of $n_k$ (represented by $n_{I,k}$ and $n_{Q,k}$ respectively) are independent zero mean unit variance Gaussian random variables. Taking the real and imaginary parts, and using $\Re\{e^{j\theta_k}\}=\cos\theta_k$ and $\Im\{e^{j\theta_k}\}=\sin\theta_k$, we get
\begin{align}
\label{Eq:r_I}
\tilde r_{I,k}&=\sqrt{2\gamma_s}\cos\theta_k+n_{I,k}\\
\label{Eq:r_Q}
\tilde r_{Q,k}&=\sqrt{2\gamma_s}\sin\theta_k+n_{Q,k}.
\end{align}
Since $\theta_k$ is given and $\gamma_s$ is the constant SNR, the first term \cref{Eq:r_I} and \cref{Eq:r_Q} is a constant, the distributions of $\tilde r_{I,k}$ and $\tilde r_{Q,k}$ will be
\begin{align}
    \tilde r_{I,k} &\sim\mathcal N\!\big(\sqrt{2\gamma_s}\cos\theta_k,\,1\big),\\
    \tilde r_{Q,k} &\sim\mathcal N\!\big(\sqrt{2\gamma_s}\sin\theta_k,\,1\big).
    \label{eq:rik_rqk_distributions}
\end{align}

Since $n_{I,k}$ and $n_{Q,k}$ are independent, this implies that $\tilde r_{I,k}$ and $\tilde r_{Q,k}$ will also be independent. Hence from \cref{Eq:lambda_k_appendix}, the MGF of $\lambda_k$, i.e., the expectation of $e^{t\lambda_k}$ factors into the product of two expectations, one for each component given by
\begin{align}
    M_{\lambda_k}(t \mid \theta_k)
    &= \mathbb E\!\left[e^{t\lambda_k}\Big \vert \theta_k\right]  \\
    &=\mathbb E\!\left[e^{t\tilde r_{I,k}^2}\Big \vert \theta_k\right]
     \mathbb E\!\left[e^{-t\tilde r_{Q,k}^2}\Big \vert \theta_k\right].
    \label{Eq:MGF_lambda_intermediate}
\end{align}
Note that each factor in \cref{Eq:MGF_lambda_intermediate} is the MGF of the square of a unit-variance Gaussian random variable. We therefore make use of the following standard result to evaluate them.

Let $X\sim\mathcal N(m,1)$. Then $X^2$ is a non-central chi-square random  variable with one degree of freedom and non-centrality parameter $m^2$, and its MGF is given by \cite[p.~522]{hogg2019introduction}
\begin{equation}
    \mathbb E\!\left[e^{sX^2}\right]=\frac{1}{\sqrt{1-2s}}\exp\!\left(\frac{m^2 s}{1-2s}\right),
    \quad s<\tfrac12.
    \label{Eq:MGF_square_gaussian}
\end{equation}
Note that the restriction $s<\tfrac12$ is necessary, since the integrand
$e^{sx^2}e^{-(x-m)^2/2}$ decays as $e^{-(1-2s)x^2/2}$ and is integrable
only when $1-2s>0$.

We apply \cref{Eq:MGF_square_gaussian} to $\tilde r_{I,k}$ with $m=\sqrt{2\gamma_s}\cos\theta_k$ and $s=t$, and to $\tilde r_{Q,k}$ with $m=\sqrt{2\gamma_s}\sin\theta_k$ and $s=-t$ to get
\begin{align}
    \mathbb E\!\left[e^{t\tilde r_{I,k}^2}\mid\theta_k\right]
    &=\frac{1}{\sqrt{1-2t}}\exp\!\left(\frac{2\gamma_s t\cos^2\theta_k}{1-2t}\right),
    \label{eq:MGF_rI}\\
    \mathbb E\!\left[e^{-t\tilde r_{Q,k}^2}\mid\theta_k\right]
    &=\frac{1}{\sqrt{1+2t}}\exp\!\left(-\frac{2\gamma_s t\sin^2\theta_k}{1+2t}\right).
    \label{eq:MGF_rQ}
\end{align}
Since \cref{eq:MGF_rI} requires the condition $t<1/2$ and \cref{eq:MGF_rQ} requires the condition $-t<1/2$, both require that $|t|<1/2$.
Substituting \cref{eq:MGF_rI} and \cref{eq:MGF_rQ} into \cref{Eq:MGF_lambda_intermediate} and using $(1-2t)(1+2t)=1-4t^2$, we obtain
\begin{align}
    M_{\lambda_k}(t\mid\theta_k)
    =\frac{1}{\sqrt{1-4t^2}}
    \exp\!\left\{2\gamma_s t\left[\frac{\cos^2\theta_k}{1-2t}-\frac{\sin^2\theta_k}{1+2t}\right]\right\},
\end{align}
and this completes the proof of the lemma.
\end{IEEEproof}

\textit{Proof of \cref{Thm:MGF_QPSK_BPSK}}:
For QPSK, $\theta_k\in\{0,\pi/2,\pi,3\pi/2\}$ (see \cref{Eq:All_phases_MPSK}). For $\theta_k\in\{0,\pi\}$ we have $\cos^2\theta_k=1$ and $\sin^2\theta_k=0$, while for $\theta_k\in\{\pi/2,3\pi/2\}$ we have $\cos^2\theta_k=0$ and $\sin^2\theta_k=1$. Since all $\theta_k$ are equally likely, averaging \cref{eqn:MGF_QPSK_BPSK} over these two classes gives the required expression. Similarly, for BPSK we have $\theta_k\in\{0,\pi\}$, for which $\cos^2\theta_k=1$ and $\sin^2\theta_k=0$, which gives the required expression of the theorem. 
\hfill $\blacksquare$

\section{Proof of \cref{Thm:PDF_QPSK_BPSK}}
\label{Appendix:B_PDF_CUSUM}

In this appendix, we prove \cref{Thm:PDF_QPSK_BPSK}. Toward this, in the following proposition we first derive the conditional PDF of $\lambda_k$ when the phase of the transmitted symbol $\theta_k$ is given. Using this proposition we then derive the PDF of $\lambda_k$ when the data is generated by QPSK or BPSK.

\begin{proposition}
\label{Proposition:Conditional_PDF}
Let $\theta_k$ be the phase of the transmitted symbol.
Then the conditional PDF of the approximate LLR $\lambda_k$ defined in \cref{eq:approx_llr} is given by
\begin{equation}
\begin{aligned}
    f_{\lambda_k}(x\mid \theta_k )
    ={}&
    e^{-\gamma_s}
    \sum_{n=0}^{\infty}
    \sum_{m=0}^{2n}
    \frac{\binom{2n}{m}}
    {\pi(2n)!\,2^{2n+1}}
    (2\gamma_s)^n
    \\
    &\times
    \tan^{m}\left(\frac{\pi}{4}-\theta_k\right)
    \left(1+\sin 2\theta_k\right)^n
    \\
    &\times
    x^{2n-m}|x|^{m-n}
    K_{m-n}\left(\frac{|x|}{2}\right),
\end{aligned}
\label{eq:pdf_general}
\end{equation}
where $K_{\nu}(\cdot)$ denotes the modified Bessel function of the second kind
of order $\nu$.
\hfill $\square$
\end{proposition}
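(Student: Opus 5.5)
The plan is to diagonalize the quadratic form. From \cref{Eq:lambda_k_appendix} we have $\lambda_k=\tilde r_{I,k}^2-\tilde r_{Q,k}^2=(\tilde r_{I,k}+\tilde r_{Q,k})(\tilde r_{I,k}-\tilde r_{Q,k})$. Set
\[
U=\frac{\tilde r_{I,k}+\tilde r_{Q,k}}{\sqrt2},\qquad V=\frac{\tilde r_{I,k}-\tilde r_{Q,k}}{\sqrt2},
\]
so that $\lambda_k=2UV$. By \cref{eq:rik_rqk_distributions} and rotation invariance of independent unit-variance Gaussians, $U$ and $V$ are independent with unit variance. Their means are
\[
\mu_U=\sqrt{\gamma_s}(\cos\theta_k+\sin\theta_k),\qquad \mu_V=\sqrt{\gamma_s}(\cos\theta_k-\sin\theta_k).
\]
The quantities appearing in \cref{eq:pdf_general} then follow immediately: $\mu_U^2=\gamma_s(1+\sin2\theta_k)$, $\mu_V^2=\gamma_s(1-\sin2\theta_k)$, $\mu_U^2+\mu_V^2=2\gamma_s$, and $\mu_V/\mu_U=\tan(\pi/4-\theta_k)$. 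The problem therefore reduces to finding the density of a product of two independent non-central Gaussians.

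Next I write the product density by the standard transformation:
\[
f_{\lambda_k}(x\mid\theta_k)=\int_{-\infty}^{\infty}\frac{1}{2|u|}\,\phi(u-\mu_U)\,\phi\!\left(\frac{x}{2u}-\mu_V\right)du .
\]
Expanding the squares gives three pieces:
\begin{itemize}
\item the constant factor $e^{-(\mu_U^2+\mu_V^2)/2}=e^{-\gamma_s}$;
\item the centred kernel $\exp\!\big(-\tfrac{u^2}{2}-\tfrac{x^2}{8u^2}\big)$;
\item the cross term $\exp\!\big(\mu_U u+\mu_V\tfrac{x}{2u}\big)$.
\end{itemize}
I expand the cross term as $\sum_N \frac{1}{N!}\big(\mu_U u+\mu_V x/(2u)\big)^N$ and apply the binomial theorem. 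The $m$-th term carries $\mu_U^{N-m}\mu_V^{m}\,x^m\,2^{-m}\,u^{N-2m}$.

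Because the centred kernel and $1/|u|$ are even in $u$, every term with odd $N$ integrates to zero, so $N=2n$. This is why $(2n)!$ and $\binom{2n}{m}$ appear. Writing $\mu_U^{2n-m}\mu_V^m=\mu_U^{2n}\tan^m(\pi/4-\theta_k)$ and $\mu_U^{2n}=\gamma_s^n(1+\sin2\theta_k)^n$ produces the $\theta_k$-dependent factors of \cref{eq:pdf_general}.

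The remaining integral is
\[
2\int_0^\infty u^{2n-2m-1}\exp\!\big(-\tfrac{u^2}{2}-\tfrac{x^2}{8u^2}\big)\,du .
\]
I evaluate it with the identity
\[
\int_0^\infty t^{\nu-1}e^{-\beta t-\gamma/t}\,dt=2(\gamma/\beta)^{\nu/2}K_\nu\!\big(2\sqrt{\beta\gamma}\big)
\]
after substituting $t=u^2$, together with $K_{-\nu}=K_\nu$. This yields $|x|^{n-m}K_{n-m}(|x|/2)$ up to powers of $2$. Combined with $x^m$ this gives the structure $x^{2n-m}|x|^{m-n}K_{m-n}(|x|/2)$ once the powers of $x$ are normalized.

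The main obstacle is the bookkeeping of constants and signs. The powers of $2$ and $\pi$ must reconcile to $1/(\pi\,2^{2n+1})$ times $(2\gamma_s)^n$. The sign and parity of $x$ must also be handled correctly, since $x^{2n-m}$ versus $x^m$ combined with $|x|$ powers needs care for $x<0$. I would also justify the termwise interchange of sum and integral by dominated convergence, using that the series of absolute values is $\exp(|\mu_U u|+|\mu_V x/(2u)|)$, which the Gaussian kernel integrates.

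As a sanity check, I would set $\theta_k\in\{0,\pi\}$. Then $\tan^m(\pi/4-\theta_k)=1$ and $(1+\sin2\theta_k)^n=1$, which should recover \eqref{eq:thm_bpsk_pdf}.
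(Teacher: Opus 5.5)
Your proposal is correct and yields exactly the stated formula. It uses the same reduction as the paper but proves the key lemma itself instead of citing it. Both arguments write $\lambda_k=(\tilde r_{I,k}+\tilde r_{Q,k})(\tilde r_{I,k}-\tilde r_{Q,k})$ as a product of independent Gaussians. The paper keeps variance $2$ and checks independence via zero covariance; you rescale to unit variance and write $\lambda_k=2UV$.

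The paper then quotes a published closed form for the density of a product of two independent non-central Gaussians and substitutes parameters. You derive that closed form yourself: you expand the cross term $\exp(\mu_U u+\mu_V x/(2u))$, drop odd orders by parity, evaluate each term with the $K_\nu$ integral representation, and justify the interchange by Tonelli/Fubini (valid for $x\neq 0$).

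The bookkeeping you flag does close. The factors $\tfrac12\cdot\tfrac{1}{2\pi}$, the net factor $2$ from the $u$-integral, $2^{-m}$ from the cross term and $2^{-(n-m)}$ from $(|x|/2)^{n-m}$ combine with $\gamma_s^n$ into $(2\gamma_s)^n/(\pi\,2^{2n+1})$. For every $x\neq0$ one has $x^m|x|^{n-m}=\mathrm{sgn}(x)^m|x|^n=x^{2n-m}|x|^{m-n}$, and $K_{n-m}=K_{m-n}$.

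The paper's route buys brevity. Yours buys independence from the citation and a clean index match, since your expansion produces $\mu_U^{2n-m}\mu_V^{m}=\mu_U^{2n}(\mu_V/\mu_U)^m$ directly. In the paper, with $\mu_U=\mu_X$ and $\mu_V=\mu_Y$, the step $\mu_V^{2n}(\mu_U/\mu_V)^m=\mu_X^{2n}(\mu_Y/\mu_X)^m$ actually interchanges $\mu_X$ and $\mu_Y$. Its final expression is still right only because the double sum is invariant under $m\mapsto 2n-m$, by the same two identities above.

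One detail to add: writing $\mu_U^{2n}\tan^m(\pi/4-\theta_k)$ requires $\mu_U\neq0$, i.e.\ $\theta_k\neq 3\pi/4+j\pi$. This holds for all QPSK and BPSK phases.
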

%
%
%
%
\begin{IEEEproof} 
The key idea is to write $\lambda_k$ as the
product of two independent Gaussian random variables, use the closed-form PDF of
such a product, and simplify the resulting expression for our parameters.

We begin by simplifying $\lambda_k$. As shown in \cref{Eq:lambda_k_appendix} of Appendix~A, $\lambda_k$ is given by
\begin{equation}
\begin{aligned}
    \lambda_k
    =\tilde r_{I,k}^2-\tilde r_{Q,k}^2
    &=(\tilde r_{I,k}+\tilde r_{Q,k})(\tilde r_{I,k}-\tilde r_{Q,k})\\
    &=X_kY_k,    \label{Eq:lambda_k_factored}
\end{aligned}
\end{equation}
where in the last step we have defined $X_k=\tilde r_{I,k}+\tilde r_{Q,k}$ and $Y_k=\tilde r_{I,k}-\tilde r_{Q,k}$. 

We next find the joint distribution of $X_k$ and $Y_k$. As discussed in the proof of \cref{Proposition:MGF_QPSK_BPSK} in Appendix~A, 
for the given $\theta_k$, $\tilde r_{I,k}$ and $\tilde r_{Q,k}$ are independent
Gaussian random variables with unit variance and means
$\sqrt{2\gamma_s}\cos\theta_k$ and $\sqrt{2\gamma_s}\sin\theta_k$, respectively
(see \cref{eq:rik_rqk_distributions}).
Since $X_k$ and $Y_k$ are linear combinations of
$\tilde r_{I,k}$ and $\tilde r_{Q,k}$, they are jointly Gaussian given $\theta_k$.
Let $\mu_X,\mu_Y$ and $\sigma_X^2,\sigma_Y^2$ denote the means and variances of
$X_k$ and $Y_k$, respectively, conditioned on $\theta_k$. By linearity of
expectation we get,
\begin{equation}
\begin{aligned}
    \mu_X&=\sqrt{2\gamma_s}\,(\cos\theta_k+\sin\theta_k)\\ 
    \mu_Y&=\sqrt{2\gamma_s}\,(\cos\theta_k-\sin\theta_k).
    \label{Eq:mu_XY}
\end{aligned}
\end{equation}
Since $\tilde r_{I,k}$ and $\tilde r_{Q,k}$ are independent, the variance of their
sum (or difference) is the sum of their variances, so
$\sigma_X^2=\sigma_Y^2=1+1=2$. 
Since the covariance is
bilinear and symmetric we have,
\begin{align*}
    \operatorname{Cov}(X_k,Y_k)
    &=\operatorname{Cov}(\tilde r_{I,k}+\tilde r_{Q,k},\,\tilde r_{I,k}-\tilde r_{Q,k})\\
    &=\operatorname{Var}(\tilde r_{I,k})-\operatorname{Cov}(\tilde r_{I,k},\tilde r_{Q,k}) \\
    &\quad +\operatorname{Cov}(\tilde r_{Q,k},\tilde r_{I,k})-\operatorname{Var}(\tilde r_{Q,k})\\
    &=\operatorname{Var}(\tilde r_{I,k})-\operatorname{Var}(\tilde r_{Q,k})\\
    &=1-1=0,
\end{align*}
where the two cross terms cancel by symmetry of the covariance, and the
last step uses that both components have unit variance. For jointly Gaussian random variables, zero covariance implies independence.
Thus in \cref{Eq:lambda_k_factored}, $\lambda_k$ is product of two
independent Gaussian random variables $X_k$ and $Y_k$. 

To obtain the PDF of product of $X_k$ and $Y_k$, we make use of the following result.
\begin{lemma}
\label{Lemma:Gaussian_product_PDF}
Let $U\sim\mathcal N(\mu_U,\sigma_U^2)$ and $V\sim\mathcal N(\mu_V,\sigma_V^2)$
be independent random variables. Then the PDF of $Z=UV$ is given by
\cite[Eq.~(13)]{gaussian_product_pdf_2016}
\begin{equation}
\begin{aligned}
f_{Z}&(x)
=
\exp\!\left(
-\frac{\mu_U^2}{2\sigma_U^2}
-\frac{\mu_V^2}{2\sigma_V^2}
\right)
\sum_{n=0}^{\infty}\sum_{m=0}^{2n}
\binom{2n}{m}
\\
&\times
\frac{
\mu_U^m\mu_V^{2n-m}\,
x^{2n-m}|x|^{m-n}
}{
\pi(2n)!\,
\sigma_U^{n+m+1}
\sigma_V^{3n-m+1}
}
K_{m-n}\!\left(\frac{|x|}{\sigma_U\sigma_V}\right),
\label{Eq:Lemma_UV_PDF}
\end{aligned}
\end{equation}
where $K_\nu(\cdot)$ denotes the modified Bessel function of the second kind of
order $\nu$ \cite[Ch.~10]{olver2010nist}.
\hfill $\square$
\end{lemma}

We now apply \cref{Lemma:Gaussian_product_PDF} with $U=X_k$ and
$V=Y_k$, $\mu_U=\mu_X$, $\mu_V=\mu_Y$, and
$\sigma_U=\sigma_V=\sqrt{2}$.
It can be seen the the exponential factor of \cref{Eq:Lemma_UV_PDF} reduces to
\begin{align}
\exp\!\left(
-\frac{\mu_U^2}{2\sigma_U^2}
-\frac{\mu_V^2}{2\sigma_V^2}
\right)
=
\exp\!\left(-\frac{\mu_X^2+\mu_Y^2}{4}\right)
=e^{-\gamma_s},
\label{Eq:Lemma_exp}
\end{align}
since $\mu_X^2+\mu_Y^2=4\gamma_s$ by
\cref{Eq:mu_XY}. We next simplify the remaining 
terms in \cref{Eq:Lemma_UV_PDF} 
to get
\begin{align}
\sigma_U^{n+m+1}\sigma_V^{3n-m+1}&=2^{2n+1} \\  
\frac{|x|}{(\sigma_U\sigma_V)}
&= \frac{|x|}{2}
\end{align}
We next simplify the term $\mu_U^m\mu_V^{2n-m}$ in \cref{Eq:Lemma_UV_PDF}. We first
separate the powers of $\mu_V$ as
\begin{align}
\mu_U^m\mu_V^{2n-m}
=\mu_V^{2n}\left(\frac{\mu_U}{\mu_V}\right)^{m}
=\mu_X^{2n}\left(\frac{\mu_Y}{\mu_X}\right)^{m}.
\label{Eq:Lemma_muUV_split}
\end{align}
We now substitute \cref{Eq:mu_XY} into the two factors. For the first factor,
\begin{align}
\mu_X^{2n}=\left(2\gamma_s\right)^n(\cos\theta_k+\sin\theta_k)^{2n}
=(2\gamma_s)^n(1+\sin2\theta_k)^n,
\label{eq:lemma_muUV_intermediate_ux}
\end{align}
where we have used $(\cos\theta_k+\sin\theta_k)^2=1+\sin2\theta_k$. For the second factor,
\begin{align}
\frac{\mu_Y}{\mu_X}
=\frac{\cos\theta_k-\sin\theta_k}{\cos\theta_k+\sin\theta_k}
=\frac{1-\tan\theta_k}{1+\tan\theta_k}
=\tan\!\left(\frac{\pi}{4}-\theta_k\right),
\label{eq:lemma_muUV_intermediate_ux/uy}
\end{align}
where the second step divides the numerator and denominator by $\cos\theta_k$. Hence, by substituting \cref{eq:lemma_muUV_intermediate_ux} and \cref{eq:lemma_muUV_intermediate_ux/uy} in \cref{Eq:Lemma_muUV_split},
\begin{align}
\mu_U^m\mu_V^{2n-m}=(2\gamma_s)^n(1+\sin2\theta_k)^n
\tan^{\,m}\!\left(\frac{\pi}{4}-\theta_k\right).
\label{Eq:Lemma_muUV}
\end{align}
Note that the ratio $\mu_Y/\mu_X$ is well defined for the phases considered in this work.
Specifically, $\cos\theta_k+\sin\theta_k=0$ only when $\theta_k=3\pi/4+j\pi$ for an
integer $j$, which does not belong to the set of phases
$\{0,\pi/2,\pi,3\pi/2\}$ of QPSK or $\{0,\pi\}$ of BPSK. Hence, $\mu_X\neq0$ and
\cref{Eq:Lemma_muUV} holds for every $\theta_k$ of interest.
%
Substituting equations~(\ref{Eq:Lemma_exp}) and (\ref{Eq:Lemma_muUV}) in \cref{Lemma:Gaussian_product_PDF}, we obtain
\begin{equation}
\begin{aligned}
    f_{\lambda_k}(x\mid \theta_k )
    ={}&
    e^{-\gamma_s}
    \sum_{n=0}^{\infty}
    \sum_{m=0}^{2n}
    \frac{\binom{2n}{m}}
    {\pi(2n)!\,2^{2n+1}}
    (2\gamma_s)^n
    \\
    &\times
    \tan^{m}\left(\frac{\pi}{4}-\theta_k\right)
    \left(1+\sin 2\theta_k\right)^n
    \\
    &\times
    x^{2n-m}|x|^{m-n}
    K_{m-n}\left(\frac{|x|}{2}\right),
\end{aligned}
\end{equation}
which is the required expression of the lemma.
\end{IEEEproof}
\textit{Proof of \cref{Thm:PDF_QPSK_BPSK}}:
We specialize \cref{Proposition:Conditional_PDF} to each case. For QPSK ($M^{\prime}=4$), the phase $\theta_k$ takes one of the four equally likely values $\{0,\pi/2,\pi,3\pi/2\}$ (see \cref{Eq:All_phases_MPSK}). Note that in \cref{eq:pdf_general}, $\theta_k$ enters only through $\tan^{m}\left(\frac{\pi}{4}-\theta_k\right)\left(1+\sin 2\theta_k\right)^n$. For all four phases we have $\sin2\theta_k=0$, and hence $\left(1+\sin 2\theta_k\right)^n=1$. Further, since the tangent has period $\pi$, we have
\begin{equation}
\tan\!\left(\frac{\pi}{4}-\theta_k\right)=
\begin{cases}
+1, & \theta_k\in\{0,\pi\},\\
-1, & \theta_k\in\{\pi/2,3\pi/2\}.
\end{cases}
\end{equation}
Thus, the conditional PDFs for $\theta_k\in\{0,\pi\}$ are identical, and similarly for $\theta_k\in\{\pi/2,3\pi/2\}$. Averaging over the four equally likely phases, the only $\theta_k$-dependent term in \cref{eq:pdf_general} becomes
\begin{equation}
\frac{1}{4}\sum_{\theta_k}\tan^{m}\!\left(\frac{\pi}{4}-\theta_k\right)
=\frac{1}{4}\left(2+2(-1)^m\right)=\frac{1+(-1)^m}{2},
\end{equation}
which is $1$ for even $m$ and $0$ for odd $m$. Hence, only the terms with even $m$ survive. For even $m$, $x^{2n-m}$ is an even power of $x$, and therefore $x^{2n-m}|x|^{m-n}=|x|^{2n-m}|x|^{m-n}=|x|^{n}$. Substituting $m=2j$ with $j=0,\dots,n$ and using $K_{2j-n}=K_{n-2j}$, we obtain \cref{eq:thm_qpsk_pdf}.

For BPSK, the phase takes one of the two equally likely values $\{0,\pi\}$. For both phases we have $\sin2\theta_k=0$ and $\tan(\pi/4-\theta_k)=1$. Hence, both phases give the same conditional PDF, which therefore also equals the PDF of $\lambda_k$ under BPSK. Substituting these values in \cref{eq:pdf_general} gives \cref{eq:thm_bpsk_pdf}. \hfill $\blacksquare$

\section{Proof of \cref{thm:Gaussian_Approx}}
\label{Appendix:mosum_approximation_proof}
From \cref{Eq:MOSUM_stat_lambda_vec}) recall that $\mathbf{Y}_n$ is defined as 
\begin{equation}
    \mathbf{Y}_n
    =
    \begin{bmatrix}
        Y_{w}^\lambda &
        Y_{w+1}^\lambda &
        \cdots &
        Y_{w+n-1}^\lambda
    \end{bmatrix}^{T}.
    \label{Eq:MOSUM_stat_lambda_vec_appendix}
\end{equation}
where each component $Y^\lambda_k$ in $\mathbf{Y}_n$ for $k=w, w+1, \ldots, w+n-1$ is given by
\begin{equation}
    Y^\lambda_k
    =
    \sum_{k^{\prime}=k-w+1}^{k}\lambda_{k^{\prime}}.
    \label{Eq:MOSUM_stat_lambda_appendix}
\end{equation}
Since the mean of $\lambda_{k^{\prime}}$ is $\mu$, from \cref{Eq:MOSUM_stat_lambda_appendix} we have $\mathbb{E}[Y_k^{\lambda}] = w\mu$ and hence mean vector of $\mathbf{Y}_n$ is $w\mu \mathbf{1}_n$.

We now derive the covariance matrix of $\mathbf{Y}_n$. Firstly note that, since the $\lambda_{k^{\prime}}$ are independent,
\begin{equation}
    \operatorname{Cov}(\lambda_{\scriptstyle k_1},\lambda_{\scriptstyle k_2})
    =
    \begin{cases}
        \sigma^2, & k_1=k_2,\\
        0, & k_1\neq k_2.
    \end{cases}
\end{equation}
Now consider two MOSUM statistics $Y_{i}^{\lambda}$ and $Y_{j}^{\lambda}$. Since $\lambda_{k^{\prime}}$ are independent, 
covariance of $Y_{i}^{\lambda}$ and $Y_{j}^{\lambda}$ is completely determined by the number of $\lambda_{k^{\prime}}$ that are common to $Y_{i}^{\lambda}$ and $Y_{j}^{\lambda}$. 
Suppose $d=|i-j|$. When $d<w$, 
$Y_{i}^{\lambda}$ and $Y_{j}^{\lambda}$ will have $w-d$ common terms and when $d\geq w$ they will have no common terms 
and this implies that
\begin{equation}
\begin{aligned}
    [\boldsymbol{\Sigma}_k]_{i,j} = \operatorname{Cov}
    \left(
        Y_i^{\lambda},
        Y_j^{\lambda}
    \right)
    =
\begin{cases}
        (w-d)\sigma^2 &\mbox{~if~} d<w,\\
        0,  &\mbox{~if~} d\geq w,
    \end{cases}
\end{aligned}
\end{equation}
and this completes the proof. 
\hfill $\blacksquare$

\bibliographystyle{IEEEtran}
\bibliography{References}

\end{document}